\newtheorem{theorem}{Theorem}
\newtheorem{definition}{Definition}
\newtheorem{observation}{Observation}
\newcommand{\genmincostalgo}{\emph{Min-Cost}}
\newcommand{\mincostalgo}{\emph{Seq-Hungarian}}
\newcommand{\firstfairalgotwo}{\emph{C-Balance}}
\newcommand{\firstfairalgo}{\emph{DC-Balance}}
\newcommand{\extfairalgo}{\emph{EDC-Balance}}
\newcommand{\BibTeX}{B\kern-.05em{\sc i\kern-.025em b}\kern-.08em\TeX}
\newcommand{\proceedingsnote}{
\vspace{-1.5cm}
  Please cite this paper as: Vibulan J., Swapnil Dhamal, and Shweta Jain. ``The Multi-Stage Assignment Problem:
A Fairness Perspective.'' Proceedings of the 28th European Conference on Artificial Intelligence, 2025.
  \vspace{-0.5cm}
}
\begin{document}

\begin{frontmatter}

\paperid{0859}

\preto\maketitle{\begin{center}\proceedingsnote\end{center}\vspace{1em}}

\title{The Multi-Stage Assignment Problem:\\A Fairness Perspective}

\author[A]{\fnms{Vibulan}~\snm{J}\orcid{0009-0002-9445-3694}}
\author[B]{\fnms{Swapnil}~\snm{Dhamal}\orcid{0000-0001-7434-0778}}
\author[B]{\fnms{Shweta}~\snm{Jain}\orcid{0000-0002-2666-9058}} 

\address[A]{Indian Institute of Information Technology, Design and Manufacturing, Kancheepuram, Chennai, India}
\address[B]{Indian Institute of Technology Ropar, Rupnagar, India}

\begin{abstract}
This paper explores the problem of fair assignment on Multi-Stage graphs. A \emph{multi-stage graph} consists of nodes partitioned into $K$ disjoint sets (stages) structured as a sequence of weighted bipartite graphs formed across adjacent stages. The goal is to assign node-disjoint paths to $n$ agents starting from the first stage and ending in the last stage. We show that an efficient assignment that minimizes the overall sum of costs of all the agents' paths may be highly unfair and lead to significant cost disparities (envy) among the agents. We further show that finding an envy-minimizing assignment on a multi-stage graph is NP-hard. We propose the \emph{C-Balance} algorithm, which guarantees envy that is bounded by $2M$ in the case of two agents, where $M$ is the maximum edge weight. We demonstrate the algorithm's tightness by presenting an instance where the envy is $2M$. We further show that the cost of fairness ($CoF$), defined as the ratio of the cost of the assignment given by the fair algorithm to that of the minimum cost assignment, is bounded by $2$ for \emph{C-Balance}. We then extend this approach to $n$ agents by proposing the \emph{DC-Balance} algorithm that makes iterative calls to \emph{C-Balance}. We show the convergence of \emph{DC-Balance}, resulting in envy that is arbitrarily close to $2M$. We derive $CoF$ bounds for \emph{DC-Balance} and provide insights about its dependency on the instance-specific parameters and the desired degree of envy. We experimentally show that our algorithm runs several orders of magnitude faster than a suitably formulated ILP.
\end{abstract}

\end{frontmatter}

\label{page:start}

\section{Introduction}
\label{sec:intro}
Assignment of resources over graphs is crucial for optimizing the use of resources in various domains such as manufacturing, project management, parallel computing, and routing. In graph representation of the underlying problem, tasks, resources, or processes are often represented as nodes, and the dependencies between them as edges. Graphs systematically represent a problem's key features like dependencies, precedence constraints, and other relationships between tasks or processes. One such type of graph is a fully connected multi-stage (FCMS) graph, where the set of nodes is partitioned into stages, and every node in a stage is connected to every node in the next stage. FCMS graphs have been extensively deployed in supply chain \cite{bahrampour2016modeling}, vehicle routing problem \cite{KIM20108424},  task assignment \cite{VALADARESTAVARES199092}, 
routing optimization, portfolio optimization and financial strategies \cite{MULVEY20041}, etc. 
Note that the full-connectedness across adjacent stages is justified in most applications, since in any connected graph, even if there is no direct edge between two nodes, one can find the shortest path cost between these two nodes using a shortest path algorithm and connect these nodes with a direct edge having the computed cost. 

In FCMS graph, assigning resources/tasks to optimize a utility objective, such as a function of cost and time, has been a problem of immense interest, and there exist many exciting works in the literature. However, apart from maximizing utility in real-world applications, the fairness of resources among agents is critical. Fair assignment refers to assigning resources or workloads equitably  
while ensuring that no participant is unfairly burdened or underutilized.

To understand why fair assignment on FCMS graph is important, consider the example of multi-product multi-stage supply chain network design \cite{bahrampour2016modeling}. A company like Bosch manages multiple clients (agents) like Toyota, Hyundai, Honda, Ford, BMW, etc. Consider a case where each client requires Bosch to produce an automobile part that follows a given manufacturing process using assembly lines available. A stage represents a manufacturing stage of an automobile part, and nodes represent multiple workstations at several locations. Since each part must pass through each manufacturing stage, it must be allocated exactly one workstation in each stage. Traveling from one workstation to another is costly and time-consuming, and traditionally, this problem has been looked at from the lens of minimizing the total cost/time.
While it is essential to minimize the overall cost/time that Bosch takes to deliver the automobile parts to the different clients, it is also essential to minimize the difference in product delivery costs/times across the clients, for instance, to ensure fair market competition among them. This corresponds to minimizing envy in the multi-stage graph assignment problem. 

It is also important that in each assembly line, production capacity is optimally utilized, and no workstation is overloaded. Therefore, we look for a solution where, at each stage, each node can be assigned to 
not more than
one agent. 
Further, as our primary motivation comes from the application of multi-product multi-stage supply chain network design, 
we assume that the number of nodes at each stage is greater than or equal to the number of agents.
This is because the products are similar (therefore having to go through all the stages), and if there are fewer nodes at any stage, it implies that there exists a node representing a large facility that can accommodate multiple products (let’s say $b$) at a given time. In that case, one can create $b$ replicas of this node, thus effectively resulting in the number of nodes at each stage being greater than or equal to the number of agents.

To summarize, we consider fair assignment on a fully connected multi-stage (FCMS) graph.
For ease of exposition, we address the problem on a class of FCMS graphs wherein the number of nodes in each stage is equal to the number of agents; 
we term this problem as fair assignment on balanced fully connected multi-stage (BFCMS) graph.
The problem involves assigning each agent, such as a client in the supply chain example, 
a non-overlapping path on the BFCMS graph. For each assigned path, the agent incurs a cost, and the goal is to mitigate the difference in total cost incurred by any two agents. \Cref{fig:MLGR} shows an example of BFCMS and a valid assignment on the same.
We later show that our approach and results are directly applicable to general FCMS graphs. 
Next, to understand the effects of reducing the cost disparities (envy) among the agents, we study the ``cost of fairness ($CoF$)" of the proposed fair algorithms, which is defined as the ratio of the cost of the assignment provided by the fair algorithm to that of the minimum cost one.

\begin{figure}[t]

\begin{subfigure}[b]{0.24\textwidth}
    \centering\includegraphics[width=0.85\linewidth]{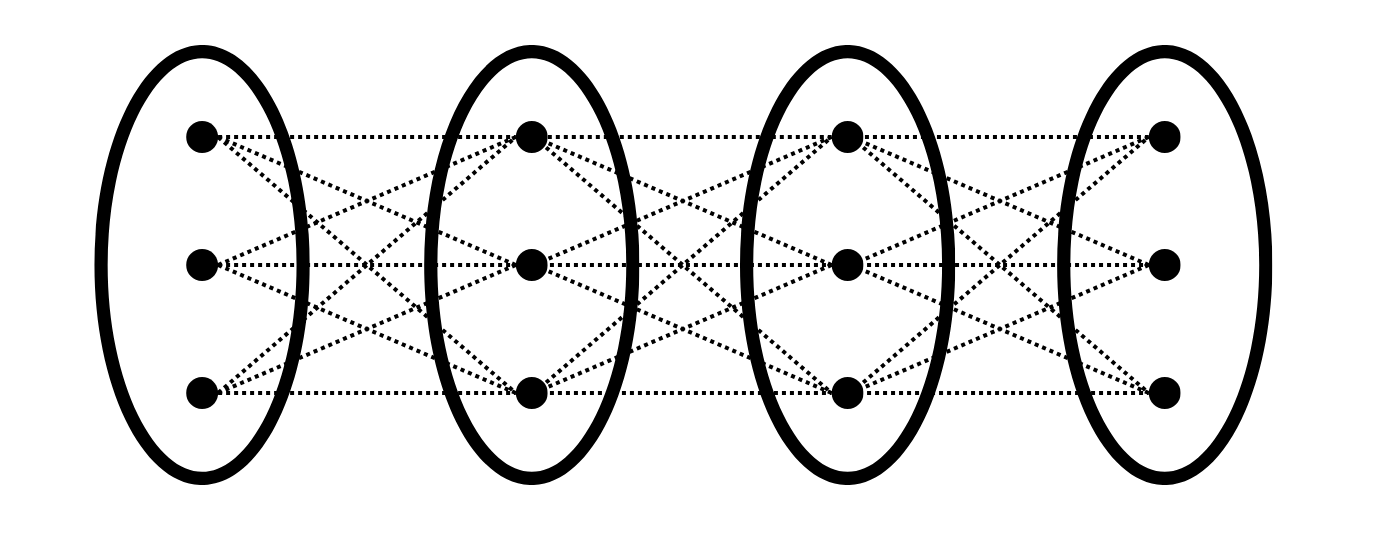}
    \caption{}
    \label{fig:multilevelgraph}
  \end{subfigure}
  \begin{subfigure}[b]{0.23\textwidth}
    \centering \includegraphics[width=0.85\linewidth]{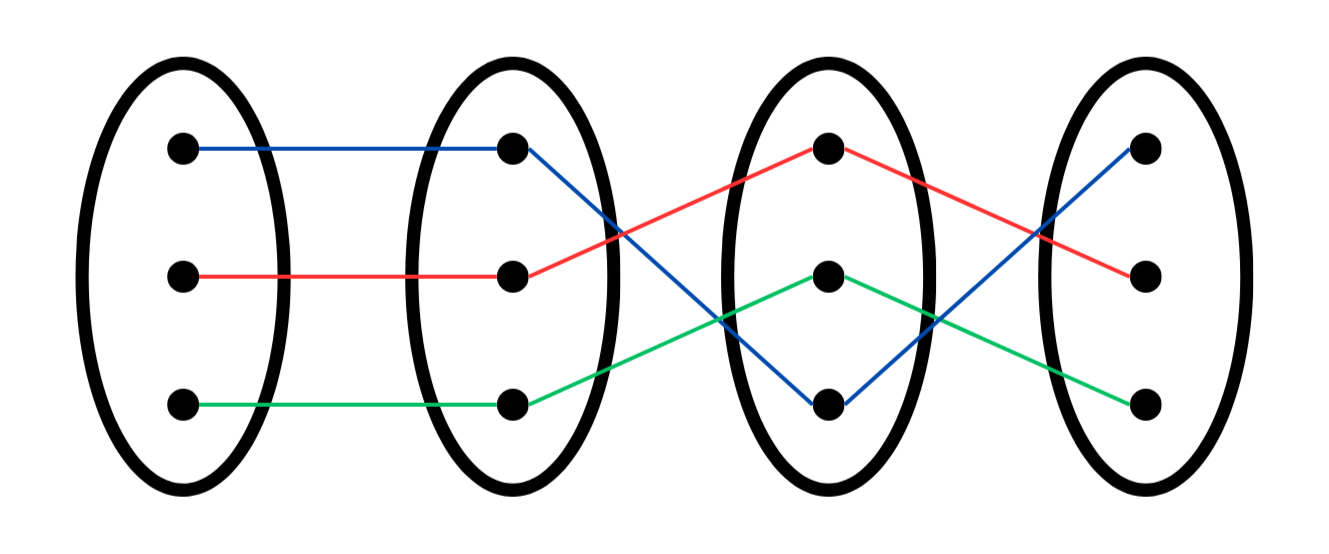 }
    \caption{}
    \label{fig:validsol}
  \end{subfigure}
  \captionsetup{skip=-12pt}
  \caption{(a) A BFCMS graph (edge weights not shown for better readability). (b) A valid assignment on the BFCMS graph.}
\label{fig:MLGR}
\end{figure}

\subsection*{Our Contributions and Results}

This paper aims to solve the fair assignment problem for $n$ agents on a given FCMS graph with $K$ stages, with $M$ denoting the maximum edge weight in the graph. Our primary contributions are as follows:

 \begin{itemize}[leftmargin=*]
 \setlength\itemsep{0.5em}

\item 
We start by showing the following:
\begin{itemize}[topsep=0.5em, leftmargin=*]
\item A minimum cost solution may result in an assignment with significantly high envy [\Cref{obs:unfairsol}].

\item Finding an envy-minimizing
assignment on an arbitrary BFCMS graph with a given $K \geq 4$ is NP-hard [\Cref{thm:nphard_numagents}],
and so is that with a given $n \geq 2$
[\Cref{thm:nphard_numlevels}]. 
\end{itemize}

    \item For the case involving two agents, we propose the Cost-Balance (\firstfairalgotwo) algorithm, which swaps the assignments to the agents in the minimum cost solution from a strategically chosen stage onwards.
    We show that:
\begin{itemize}[topsep=0.5em, leftmargin=*]
    \item \firstfairalgotwo\ ensures that the envy among agents is bounded by $2M$ [\Cref{thm:envycbalance}],

    \item The achieved bound of $2M$ is tight [\Cref{thm:cbalancetightbound}],

    \item The $CoF$ for \firstfairalgotwo\ is bounded by $2$ [\Cref{thm:cbalancecof}].
\end{itemize}
    
    \item Considering the general case with $n$ agents, we propose the Dynamic Cost-Balance (\firstfairalgo) algorithm that makes iterative calls to \firstfairalgotwo , and bounds the envy arbitrarily close to $2M$.
Given $\alpha>0$ quantifying the desired degree of envy, we show:
\begin{itemize}[topsep=0.5em, leftmargin=*]
    \item The number of iterative calls required in order to achieve envy of $(2+\alpha)M$ is $O\left(n\log\left(\frac{K}{\alpha}\right)\right)$ [\Cref{thm:num_iters}],
    
\item The $CoF$ is 
$O\left(n\log\left(\frac{K}{\alpha}\right)\right)$
[\Cref{thm:CoF}],

\item For minimizing envy,
it is infeasible to provide a constant instance-independent bound on $CoF$
[\Cref{obs:noinstind}].
\end{itemize}

\item 
We experimentally show that our algorithm runs several orders of magnitude faster than the 
ILP formulated
to minimize the total cost with the constraint that envy is bounded by $2M$.

\end{itemize}

\section{Related Work}
Addressing 
the problem of minimizing envy
has been a topic explored in prior research, such as fair division of divisible goods/tasks \cite{aziz2016,dubins1961,steinhaus1949}, assignment of indivisible goods/tasks \cite{amanatidis2022fair,bei2021,bhaskar2021approximate,lipton2004}, along with 
applications including group trip planning \cite{solanki2023fairness}, vehicle routing \cite{aleksandrov2024driver,matl2017workload}, fair bandwidth allocation \cite{coniglio2020}, fair routing for underwater networks \cite{diamant2018}, fair cost allocation in ride-sharing services \cite{lu2019}, fairness-aware load balancing \cite{kishor2020}, fair delivery \cite{hosseini2023fair}, and more.

Several works in fair delivery settings \cite{lopezsanchez2022,esmaeili2023rawlsian,nair2022gigs,gupta2022} focused on empirical studies without any theoretical guarantees. Recently, \cite{hosseini2023fair} provided some theoretical guarantees on a general unweighted tree graph with fairness computed on the number of hops each agent travels. While this is valuable, fairness considerations must go beyond mere order counts 
and account for the cost incurred (e.g., distance traveled) for completing the orders,
when extending the scenario to a city-scale delivery network
because the costs incurred for completing a given number of orders could vary significantly.
With respect to fairness in vehicular routing problems, a few works provide a fair solution by formulating the problem as an Integer Linear Program (ILP) \cite{lopezsanchez2022}. We primarily focus on delivering efficient algorithmic solutions to the problem with provable bounds by connecting the problem to that of fair routing planning \cite{hashem2015,solanki2023fairness}.

 Efforts have been made to introduce fairness in group trip planning  \cite{singhal2022,solanki2023fairness}, which aims at selecting a common path for all the agents as opposed to ours; in routing, on achieving fairness in terms of traffic on each edge \cite{kleinberg1999fairness,pioro2007fair}, and achieving collective and individual fairness among agents \cite{lopezsanchez2022,esmaeili2023rawlsian}. Our work distinguishes itself by focusing on individual fairness among agents, and since each agent follows a unique route (sequence of nodes assigned in the different stages), there is no need to address fairness in terms of traffic on each edge. Several studies have also explored supply chain optimization, focusing on efficiency improvements across different segments \cite{geunes2006supply,narahari2007bayesian,garg2005groves,biswas2004object,narahari2007performance}. Notably, the focus on fairness within this domain primarily revolves around fair profit distribution among participants \cite{chen2003multiobjective,yue2014fair,sawik2015fair,qiu2023pipeline}, which is different from our work.

The problem of fairness in FCMS graphs resembles that of fair allocation of items/tasks on graphs under `a connected and equal number of goods' constraints by representing each node as an item/task \cite{bouveret2017fair,misra2022a}. In FCMS, however, the costs of the tasks in one stage depends on the allotted tasks in the previous stage, as opposed to 
being fixed.
This constraint also makes our problem different from the problem of repeated allocation over time \cite{igarashi2023repeated,caragiannis2023repeatedly,aleksandrov2024limiting,endriss2013reduction} as these works do not consider 
the setting where the cost of a task at a time depends on the earlier allocated task. Similarly, in a min-max tree cover problem \cite{even2004min}, the goal is to find $n$ distinct trees 
with the objective of minimizing the maximum weight of any tree.
Note that an FCMS assignment demands that each tree must have exactly one node from each stage and each node must belong to at most one tree.
This constraint is not present in a general tree cover problem.

\section{Preliminaries}
Consider a weighted graph $\mathcal{G} = (V, E)$, where $V$ denotes the set of nodes and $E$ is the set of edges. The set $V$ is partitioned into $K$ disjoint subsets (or stages) $\{V_1, V_2, \ldots, V_K\}$, each containing $n$ nodes. A balanced fully connected multi-stage (BFCMS) graph (as depicted in \Cref{fig:multilevelgraph}) is defined as a sequence of $K$ such stages where a node from the $j$-th stage is linked to every node in the $j+1$-th stage through a weighted edge $e$ with weight $w_e$. 
We further denote $M = \max_{e \in E} w_e$ as the maximum edge weight in this graph. 

A valid solution $S$ to the assignment in 
BFCMS problem involves finding $n$ node-disjoint paths, each originating from a node in stage $V_1$ and terminating at a node in stage $V_K$. The solution is expressed as $S = \{P_1, P_2, \ldots, P_n\}$, where each $P_i = \{p_{i}^1, p_{i}^2, \ldots, p_{i}^{K-1}\}$ signifies an individual path with edges $p_i^j$ from stage $j$ to $j+1$ for an agent $i$. Let us denote the set of all valid solutions by $\mathcal{F}$. It should be noted that each agent $i$ is assigned one path $P_i$, meaning that the cost incurred on path $P_i$ also represents the cost incurred by agent $i$.  
\Cref{fig:validsol} depicts a valid solution to the assignment in BFCMS problem with three agents and four stages.

Note that while we consider node-disjoint paths, the physical routes themselves might not be entirely separated. 
It is important to make a distinction of the considered multi-stage assignment problem from the routing problem.
For instance, in the supply chain example discussed earlier, a path is defined by the set of workstation nodes that are to be assigned to an agent across the different stages, and not by the set of endpoints of road segments in the underlying physical road network.
If an agent is assigned a node, no other agent can be assigned that node, thus resulting in the paths being node-disjoint.

\begin{definition}[Cost of a Solution]
    The cost of a solution $S \in \mathcal{F}$ with $S=\{P_1, P_2, \ldots, P_n\}$ is defined as the sum of the weights of all the edges in the solution, expressed as $C(S) = \sum_{i=1}^n\sum_{e \in P_i} w_e$.

\end{definition}
    We also use the notation $C(P_i) = \sum_{e\in P_i} w_e$ to denote the cost of a particular path $P_i$ (or cost incurred by agent $i$).

\begin{definition}[Envy of a Solution]
    The envy of a solution $S\in \mathcal{F}$ is defined as the maximum difference between the cost of any two agents (paths) in the solution, expressed as $\mathcal{E}(S) = \max_{P_i,P_j \in S} \left( C(P_i) - C(P_j) \right)$.
\end{definition}

\begin{definition}[Cost of Fairness]
    The cost of fairness ($CoF$) for a fair algorithm $\mathcal{A}$ is defined as the ratio of the cost of the solution produced by $\mathcal{A}$ ($S_\mathcal{A}$) to the cost of the optimal solution ($S^* = \arg\min_{S\in \mathcal{F}}C(S)$), i.e.  
    $CoF(\mathcal{A}) = \frac{C(S_\mathcal{A})}{C(S^*)}$. 
\end{definition}

The notion of CoF is in line with the literature \cite{das2022individual}.
Note that it differs from Price of Fairness (PoF), which quantifies the worst-case ratio of the optimal `fair' solution to that of the optimal solution.

\subsubsection*{Minimum Cost Assignment}

\label{sec:mincostsol}

A minimum cost assignment can be obtained using 
an adaptation of Suurballe's algorithm \cite{suurballe1974disjoint} which finds $n$ node-disjoint paths of minimum total length from a source node to a terminal node. 
The adaptation involves making all the edges directed such that the direction is from a node in a stage to a node in the next stage, and
adding dummy source and terminal nodes with zero-weight edge from the source node to every node in the first stage as well as from every node in the last stage to the terminal node.
A minimum cost assignment is nothing but the set of $n$ node-disjoint paths thus obtained, post deletion of the dummy nodes and edges. 
If different agents have different pair of source and terminal nodes, then one can also use $O(\nu^3)$ algorithm provided in \cite{ROBERTSON199565} with $\nu$ being the total number of nodes in the graph.

 Note that a minimum cost solution
 can be highly unfair. 
 \Cref{fig:unfairsol_a}
 depicts such a scenario with $K$ stages and two agents; a unique optimal solution is marked with blue and red paths. In this case, the total cost as well as envy of the solution is $(K-1)(M-\delta)$. 
 With an arbitrarily high value of $K$ and especially if $\delta << M$, the envy will become arbitrarily high. Hence, we have the following observation.

 \begin{observation}
A minimum cost solution may result in an assignment with significantly high envy.
\label{obs:unfairsol}
 \end{observation}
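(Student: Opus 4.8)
The plan is to exhibit a concrete family of BFCMS instances parameterized by the number of stages $K$ (and an auxiliary parameter $\delta$) on which the unique minimum cost solution has envy that grows without bound, thereby witnessing the claim. Concretely, I would describe the instance already alluded to in \Cref{fig:unfairsol_a}: two agents, $K$ stages with two nodes per stage, where the edge weights are chosen so that one agent is forced onto a ``cheap'' path while the other is forced onto a path using an edge of weight $M-\delta$ at every stage transition, so that the two paths have costs $0$ and $(K-1)(M-\delta)$ respectively (or a symmetric variant).

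The key steps, in order, are: (i) specify the edge weights precisely so that $M$ is indeed the maximum edge weight and the gadget behaves as intended — typically a ``diagonal'' structure where staying on the expensive side is cheap in one sense but any alternative path is even worse, forcing the optimum; (ii) verify that the marked pair of node-disjoint paths (blue and red) is actually the \emph{unique} minimum cost solution — this requires checking that every other pairing of node-disjoint source-to-sink paths has strictly larger total cost, which for the two-node-per-stage gadget reduces to a small finite case analysis of how the two paths can ``cross'' between stages; (iii) compute $C(S^*) = (K-1)(M-\delta)$ and, more importantly, $\mathcal{E}(S^*) = (K-1)(M-\delta)$ since one path has cost $0$; (iv) observe that taking $K \to \infty$ (with $\delta$ fixed, e.g. $\delta \ll M$, or even $\delta \to 0$) drives $\mathcal{E}(S^*) \to \infty$, so no absolute bound on the envy of a minimum cost solution can exist.

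I expect the main obstacle to be step (ii): arguing uniqueness/optimality of the marked solution cleanly. One must rule out all ``switching'' solutions where the two agents swap sides at one or more intermediate stages; the weights have to be rigged so that any such switch incurs a strictly positive penalty relative to the non-switching assignment. Picking the gadget so that the non-expensive-side edges have weight exactly $0$ and the ``crossing'' edges have weight at least $M-\delta$ (while same-side edges on the expensive side are also forced to be somewhat costly) makes this bookkeeping manageable, but it is the only place real care is needed; the cost and envy computations themselves are immediate once the structure is fixed. Since \Cref{obs:unfairsol} only asserts existence of bad instances (``may result in''), exhibiting this single parameterized family and taking the limit suffices, and no converse or tightness claim needs to be proved.
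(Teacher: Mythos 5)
Your proposal matches the paper's own argument: the paper establishes \Cref{obs:unfairsol} precisely by exhibiting the two-agent, $K$-stage instance of \Cref{fig:unfairsol_a}, where the unique minimum cost solution assigns one agent a path of cost $0$ and the other a path of cost $(K-1)(M-\delta)$, so the envy $(K-1)(M-\delta)$ grows without bound as $K$ increases (especially when $\delta \ll M$). Your extra care about uniqueness in step (ii) is reasonable but not strictly required, since the observation only asserts that a minimum cost solution \emph{may} be highly unfair.
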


 Note in the above example that if we swap the paths allocated to agents from stage $\lfloor K/2 \rfloor$ onwards, we obtain a new solution as depicted in 
 \Cref{fig:unfairsol_b},
 which has an envy less than $M$.
However, an additional cost is incurred while performing this swap in the form of considering suboptimal edges in place of the previously considered optimal ones.
In this example, the total cost rises to $(K-2)(M-\delta)+2M$. We next discuss the algorithms towards fair assignment on BFCMS while maintaining a bound on CoF.

\begin{figure}[t]
\hspace{-0.35cm}
\begin{subfigure}[b]{0.24\textwidth}
    \centering\includegraphics[width=\linewidth]{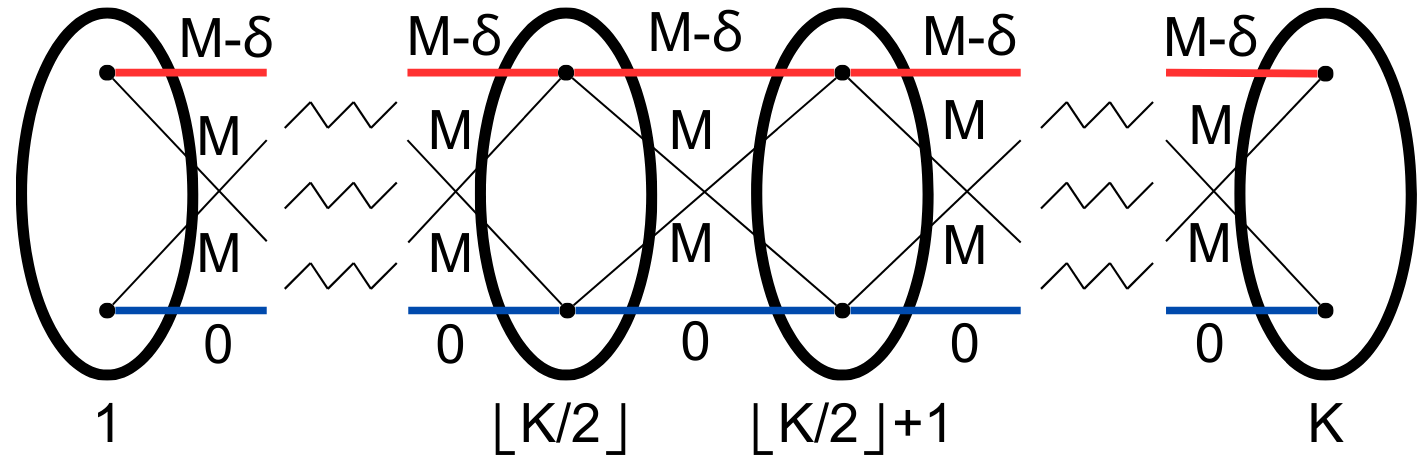}
    \caption{}
    \label{fig:unfairsol_a}
  \end{subfigure}
  \hspace{0.1cm}
  \begin{subfigure}[b]{0.24\textwidth}
    \centering \includegraphics[width=1\linewidth]{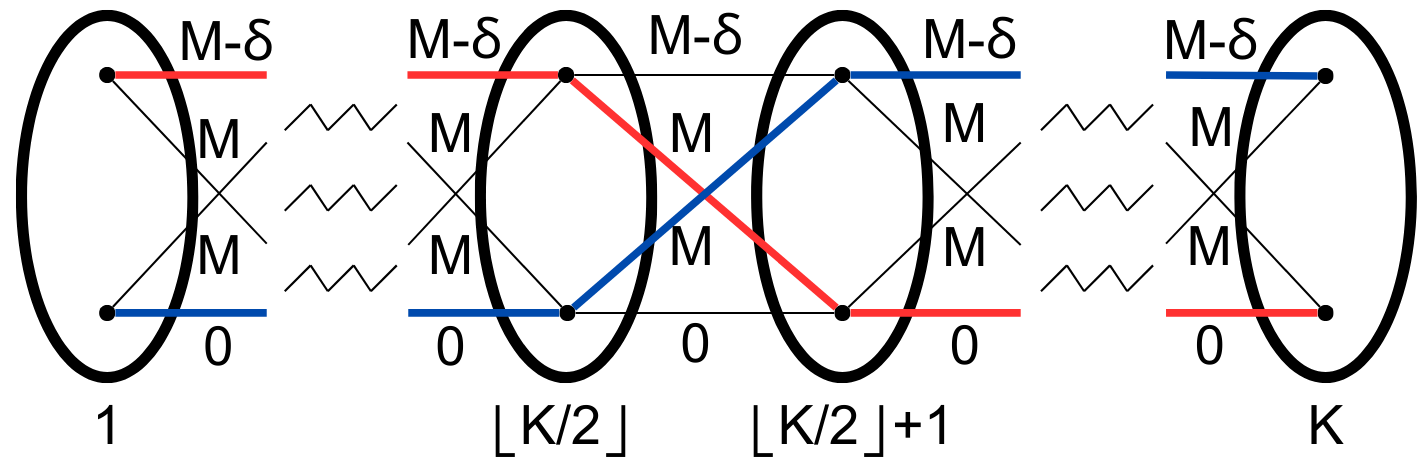 }
    \caption{}
    \label{fig:unfairsol_b}
  \end{subfigure}
    \captionsetup{skip=-12pt}
\caption{Illustration of (a) a minimum cost solution being highly unfair and (b) a possible workaround.}
\end{figure}

\section{Fair Assignment}

For ease of exposition, we present our approach and analysis while considering a BFCMS graph,
wherein the number of nodes in each stage equals the number of agents.
However, our approach and analysis are directly applicable to a general FCMS graph, wherein the number of nodes at some or all of the stages could be different from the number of agents; a note on this is provided in \Cref{sec:unequal_stages}.
We begin with showing hardness results for minimizing envy on BFCMS.

\subsection{NP-Hardness of Minimizing Envy}

\begin{restatable}{theorem}{thmnphardnumagents}
For an arbitrary BFCMS graph $\mathcal{G}$ with $K \geq 4$ stages, finding an envy minimizing assignment is NP-hard.
\label{thm:nphard_numagents}
\end{restatable}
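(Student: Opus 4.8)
The plan is to reduce from a known NP-hard partition-style problem, most naturally from \textsc{Partition} (or \textsc{3-Partition} if a stronger fixed-$n$ flavor is wanted). Given a multiset of positive integers $a_1, \dots, a_m$ summing to $2T$, I would build a BFCMS graph with exactly $K = 4$ stages and $n = m$ agents (so each stage has $m$ nodes), designed so that an assignment with envy $0$ exists if and only if the $a_i$ can be split into two subsets each summing to $T$. The intuition: in a 4-stage graph, each agent's path uses three edges — one in the $V_1 \to V_2$ bipartite layer, one in $V_2 \to V_3$, one in $V_3 \to V_4$. I would make the middle layer ($V_2 \to V_3$) carry all the ``information'': route each agent $i$ through a private bottleneck so that its middle edge has weight encoding $a_i$ (scaled), while the outer layers ($V_1 \to V_2$ and $V_3 \to V_4$) are set up with large, carefully chosen weights that force a binary choice — effectively each agent either ``adds'' an offset on the left or on the right, but node-disjointness across the outer layers couples these choices so that exactly half the agents get the left offset. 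Balancing the total per-agent cost then forces the $a_i$-values of the ``left'' group to equal those of the ``right'' group, i.e., a valid partition.

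Concretely, I would first describe the gadget precisely: fix a large constant $W$ (polynomial in the input, e.g. $W > 2T$), give agent $i$ a ``personal'' pair of nodes it is essentially forced to use in stages $2$ and $3$ (achieved by making all cross edges prohibitively expensive, or by a parity/counting argument showing any envy-$0$ solution must be ``diagonal'' in the middle), and set the middle edge weight for agent $i$ to $c + a_i$ for a common base $c$. Then in the $V_1 \to V_2$ layer I would place $m/2$ cheap edges and $m/2$ expensive (weight $W$) edges in a pattern (e.g. a bipartite structure where node-disjointness forces exactly $m/2$ agents onto the cheap left edges), and symmetrically in the $V_3 \to V_4$ layer so that an agent taking the cheap left edge is forced onto the expensive right edge and vice versa. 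The upshot is $C(P_i) = W + c + a_i$ for every agent regardless of the partition — so I would instead break the symmetry: make the left-cheap agents pay $W$ on the right \emph{plus} their $a_i$, and left-expensive agents pay $W$ on the left \emph{minus} nothing, so that $C(P_i) = W + c + a_i$ if $i$ is ``type A'' and $C(P_i) = W + c$ if $i$ is ``type B'', which doesn't balance either. The correct design, which I would nail down carefully, puts the $a_i$ weight on a left edge for type-A agents and on a right edge for type-B agents, with all other outer edges equal, so total cost is $W + c + a_i$ for everyone — and then the \emph{envy} is trivially zero, which is wrong.

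So the genuinely correct construction needs the partition to show up as a \emph{sum} constraint, not a per-agent constraint: I would instead route things so that the middle layer forces each agent $i$ into one of two ``channels,'' channel $0$ giving path cost $L_i$ and channel $1$ giving path cost $R_i$ with $R_i - L_i = a_i$ (up to a fixed offset), while node-disjointness in the middle bipartite graph is automatically satisfiable for \emph{any} choice of channels, and the outer layers contribute a fixed amount. Then envy $0$ requires all chosen path costs equal, which (after subtracting the common part) forces $\sum_{i \in \text{channel }1} a_i$ to equal a prescribed value $T$ — i.e. a partition. This is the version I would write up: define the channel gadget, prove (Claim~1) that every channel assignment yields a valid BFCMS solution, prove (Claim~2) that the solution has envy $0$ iff the channel-$1$ set sums to $T$, and conclude equivalence with \textsc{Partition}; membership in NP for the decision version (``envy $\le 0$?'', or ``envy $\le t$?'') is immediate. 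The main obstacle — and the step I expect to spend the most care on — is engineering the edge weights so that (i) node-disjointness is never an obstruction to realizing an arbitrary channel choice, (ii) no ``cheating'' solution can do better than the intended diagonal one, and (iii) the whole thing genuinely fits in $K = 4$ stages (if $K=4$ proves too tight to enforce the forced structure, I would fall back to $K=5$ or $6$ with padding stages of all-equal weights and remark that $K \ge 4$ follows by adding such trivial stages, matching the theorem's ``$K \ge 4$'' phrasing). Finally I would note the reduction is polynomial and that the same gadget, with $m$ held fixed via \textsc{3-Partition}, feeds into the companion theorem for fixed $n$.
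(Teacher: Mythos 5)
There is a genuine gap: the construction you finally settle on does not encode \textsc{Partition}. If each agent $i$'s path cost is either $L_i$ (channel $0$) or $L_i+a_i$ (channel $1$), and — as you stipulate — node-disjointness never obstructs any combination of channel choices, then the agents' choices are completely decoupled, and ``envy $=0$'' is a \emph{per-agent} equality condition, not a sum condition: with a common base $L_i=c$ and $a_i>0$ it is trivially satisfied by every agent taking channel $0$, so envy $0$ exists regardless of whether a partition exists, and the claimed equivalence fails. For a subset-sum constraint to emerge from envy, either some agent's path cost must aggregate several of the input numbers, or disjointness must globally couple the choices — your write-up establishes neither, and indeed records two earlier variants that you yourself recognize as broken before landing on this one. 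A secondary issue: the fallback of padding to $K=5$ or $6$ proves hardness only for those larger $K$ and would leave the $K=4$ case (which the theorem asserts) open; padding is only legitimate in the upward direction from an established $K=4$ reduction.

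For contrast, the paper's proof avoids all gadget engineering by choosing a source problem whose solution structure matches a $4$-stage path exactly: Numerical 3-Dimensional Matching. It builds a balanced $4$-stage graph with $n$ nodes per stage, labels the nodes of stages $2,3,4$ with the elements of $A,B,C$, and gives \emph{every} edge entering a node the weight of that node's label. Because the graph is balanced, every valid assignment uses each node exactly once, so the total cost of every assignment is the fixed value $nt$; hence envy $0$ holds iff each of the $n$ node-disjoint paths — each of which selects one element from each of $A$, $B$, $C$ — costs exactly $t$, i.e., iff the N3DM instance is a yes-instance. No large weights, forced ``diagonal'' structure, or coupling gadgets are needed, and the same labeling trick (weight on all incoming edges of a labeled node) is what you would need to repair your approach: make each path's cost the sum of the labels it collects, so that envy $0$ directly expresses the desired equal-sum grouping.
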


\begin{restatable}{theorem}{thmnphardnumlevels}
For an arbitrary BFCMS graph $\mathcal{G}$ with $n \geq 2$ agents, finding an envy minimizing assignment is NP-hard.
\label{thm:nphard_numlevels}
\end{restatable}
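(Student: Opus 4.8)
The plan is to reduce from a known NP-hard problem to fair assignment on a BFCMS graph with exactly $n=2$ agents, so the hardness comes entirely from the number of stages $K$ being part of the input. Since there are only two agents, an assignment picks, in each bipartite layer $V_j \to V_{j+1}$, a perfect matching on two nodes, i.e. either the "parallel" pairing or the "crossed" pairing; a whole solution is thus a sequence of $K-1$ binary choices, and the two path costs are determined by which of the two edges in each layer each agent receives. The envy is $|C(P_1)-C(P_2)|$, and minimizing envy amounts to choosing the binary choices so that the signed sum of per-layer cost differences is as close to $0$ as possible. This signed-sum structure is exactly what lets us encode a subset-sum / partition-type instance, so I would reduce from \textsc{Partition} (or \textsc{Subset Sum}): given integers $a_1,\dots,a_m$ with total $\Sigma$, decide whether some subset sums to $\Sigma/2$.

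The construction: build a BFCMS graph with $2$ nodes per stage and one "gadget block" of stages per integer $a_t$. In the block for $a_t$, arrange edge weights so that the difference between the two possible contributions of that block to $C(P_1)-C(P_2)$ is exactly $\pm a_t$ (one sign for "uncrossed in this block," the other for "crossed"), while the block's contribution to the \emph{total} cost $C(P_1)+C(P_2)$ is the same constant regardless of the choice — this keeps the instance clean and also immediately gives the matching $CoF$-style bookkeeping if needed. Concatenating the $m$ blocks, the attainable values of $C(P_1)-C(P_2)$ over all valid assignments are exactly $\{\sum_{t} \epsilon_t a_t : \epsilon_t \in \{+1,-1\}\}$, possibly shifted by a fixed offset that I will arrange to be $0$. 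Then envy $0$ is achievable iff $\sum_t \epsilon_t a_t = 0$ for some choice of signs, which holds iff the \textsc{Partition} instance is a yes-instance. An envy-minimizing oracle therefore decides \textsc{Partition}, establishing NP-hardness; with $m$ arbitrary, $K$ grows with the input, which is fine since here $n$ is the fixed parameter. One should also note weights can be scaled to be positive and, if the model requires it, bounded, without affecting the argument (the reduction is pseudo-polynomial in the same way \textsc{Partition} is weakly NP-hard, which suffices; alternatively reduce from \textsc{3-Partition}-style variants if strong NP-hardness is desired, though weak hardness already proves the theorem).

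The main technical care — and the step I expect to be the real obstacle — is designing the per-integer gadget so that the \emph{only} effect of the binary choice within a block is to flip the sign of that block's $\pm a_t$ contribution to the cost difference, with no cross-block interaction and no unintended way to get a smaller envy. Because the two agents' paths must be node-disjoint, within each layer the two matchings are genuinely the only options, so there is no "leakage"; but I must make sure that the weights chosen in one block do not let the adversary exploit a sequence of choices spanning block boundaries to realize a difference outside $\{\sum \epsilon_t a_t\}$. Handling this cleanly likely needs either (i) a single extra "decoupling" layer between consecutive blocks whose edges are symmetric (both pairings cost the same and contribute $0$ to the difference), so that each block's contribution is independent of its neighbors, or (ii) a careful telescoping argument showing the total difference decomposes additively over blocks regardless of boundary choices. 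I would adopt option (i), verify that each block contributes $\pm a_t$ to $C(P_1)-C(P_2)$ and a fixed amount to $C(P_1)+C(P_2)$, confirm $K = \Theta(m) \ge 2$ stages and $n = 2$ agents, and conclude that an envy-minimizing assignment yields a solution to \textsc{Partition}, completing the reduction.
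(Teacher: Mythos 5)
Your reduction for $n=2$ is sound in spirit, and your main technical worry dissolves with the right gadget: if, as in the paper's construction, you make the weight of an edge depend only on its \emph{head} node (one stage per integer $a_t$, all edges entering one designated node of that stage have weight $a_t$, all other edges weight $0$), then each stage contributes $a_t$ to whichever agent ends at the designated node and $0$ to the other, so the attainable values of $C(P_1)-C(P_2)$ are exactly $\bigl\{\sum_t \epsilon_t a_t : \epsilon_t\in\{+1,-1\}\bigr\}$ with no cross-block interaction and no decoupling layers needed; full connectedness guarantees every sign pattern is realizable. With that, ``minimum envy $=0$ iff \textsc{Partition} is a yes-instance'' goes through, and weak NP-hardness indeed suffices.

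The genuine gap is coverage: the theorem (see the contributions list, ``with a given $n\ge 2$'') asserts hardness for \emph{every} fixed number of agents $n\ge 2$ with $K$ part of the input, and the paper proves it by reducing from $n$-way number partitioning (multiprocessor scheduling), labeling one node per stage with each element of the multiset. Your \textsc{Partition} reduction establishes only the $n=2$ case, and it does not extend by naive padding: if you add $n-2$ extra zero-weight nodes per stage to accommodate more agents, the extra agents' (near-)zero costs enter the max-pairwise-difference objective and destroy the equivalence. For instance, with multiset $\{1,1\}$ the two-agent instance has minimum envy $0$, but the padded three-agent instance has minimum envy $1$ (some agent must have cost $0$ while the two unit-weight nodes are visited), so ``envy $0$ iff partition exists'' fails. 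To prove the theorem as stated you must either redo the reduction from $n$-way partitioning for each fixed $n$ (the paper's route, which is essentially your gadget generalized to $n$ rows) or supply a genuinely envy-preserving padding argument, which your proposal does not provide.
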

The proofs provided in 
\Cref{app:nphard} 
involve reduction from Numerical 3-Dimensional Matching for \Cref{thm:nphard_numagents} and Multiway Number Partitioning for \Cref{thm:nphard_numlevels}. For both reductions, 
while constructing BFCMS instances, nodes are labeled with integers from the original problem. For all incoming edges to a node labeled $\ell$, the edge weight is set to $\ell$. So, a path's cost equals the sum of labels it traverses. Hence, all valid assignments result in a CoF of 1 
as they have the same total cost. 
So, the NP-hardness persists even when the search space is restricted to solutions satisfying a constant CoF bound (e.g., CoF $\leq$ 2).
We now present our fair algorithm for two agents, and subsequently, for $n>2$.

\begin{algorithm}[t]
\DontPrintSemicolon
  
  \KwInput{A multi-stage graph $\mathcal{G} (\{V_i\}_{i=1}^K, E)$, Solution $S^*=\{P_1,P_2\}$ with $C(P_1) > C(P_2)$ } 
  \KwOutput{Fair solution $S$}

 $C_1 \gets C(P_1)$, $C_2 \gets C(P_2)$,  $\epsilon \gets C_1-C_2$
     
\If{$\epsilon > 2M$}
{
    \For{each $i$ in $[K-1]$}
    {
     $C_1'' \gets \text{Cost of agent $1$ till stage $i-1$}$
     
     $C_2'' \gets \text{Cost of agent $2$ till stage $i-1$}$
    
     $C_1' \gets \text{Cost of agent $1$ till stage $i$}$ 
    
     $C_1' \gets \text{Cost of agent $2$ till stage $i$}$ 
    
    \If{$C_1'' - C_2'' \le \epsilon/2$ and $C_1'  - C_2' > \epsilon/2$}
    {
      $\text{Swap the node assignment in $S$ at this stage}$
      
      \textbf{break}
    }
    }
}
\caption{\firstfairalgotwo}
\label{alg:cbalance}
\end{algorithm}

\subsection{Cost-Balance Algorithm for Two Agents}
\label{subsec:cbalance}

In the simplified scenario with two agents, we identify a stage $i$ in $S^*$ obtained from the minimum cost assignment and perform a swap between the nodes allocated from this stage onwards such that the resulting envy is bounded by $2M$. An example of one such swap is provided in 
\Cref{fig:unfairsol_b}.
We term this algorithm as the Cost-Balance (\firstfairalgotwo) algorithm 
[\Cref{alg:cbalance}].

Without loss of generality, assume for $P_1, P_2\in S$: $C(P_1) - C(P_2) > 2M$. If this does not hold, \firstfairalgotwo\ simply returns $S^*$. Let $\epsilon = \mathcal{E}(S^*)$ and define $i$ to be the minimum stage such that the cost difference between $C(P_1)$ and $C(P_2)$ 
becomes greater than 
$\frac{\epsilon}{2}$. \firstfairalgotwo\ swaps the paths of agents $1$ and $2$ at stage $i-1$. 
Denote the costs of paths $P_1$ and $P_2$ up to stage $i-1$ by $C_1''$ and $C_2''$, and their costs up to stage $i$ by $C_1' = C_1''+ x$ and $C_2' = C_2'' + y$, respectively. 
Let
the swapping cost for path $P_1$ be $x'$ and for path $P_2$ be $y'$.  
\Cref{fig:cbalanceproof}
illustrates this entire scenario.
Since the jump in cost difference cannot exceed $M$ at any consecutive stages, we have:
\begin{align*}
& \frac{\epsilon}{2} - M \le C_1'' - C_2'' \le \frac{\epsilon}{2} ,\;\;
\text{and} \; 
& \frac{\epsilon}{2} < C_1' - C_2' \le \frac{\epsilon}{2}+M .
\end{align*}

\begin{figure}[t]
    \centering
    \includegraphics[width=0.8\linewidth]{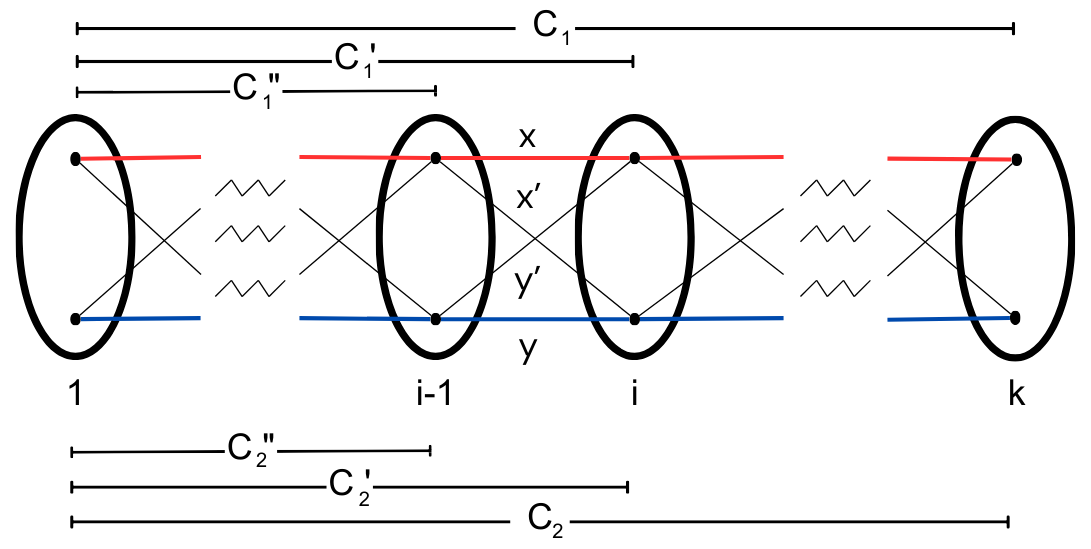}
    \caption{Illustration for the proof of \Cref{thm:envycbalance} showing that envy of the solution obtained by \firstfairalgotwo\ is bounded.}
    \label{fig:cbalanceproof}
\end{figure}

\begin{theorem}
The solution $S$ obtained by the \firstfairalgotwo\ algorithm has an envy of at most $2M$, i.e., $\mathcal{E}(S)\le 2M$.
\label{thm:envycbalance}
\end{theorem}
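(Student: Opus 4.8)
The plan is to analyze the cost of the new solution $S$ produced after the swap at stage $i-1$, and bound the difference in path costs directly using the two inequality ranges already derived in the excerpt. Write $C(P_1) = C_1'' + x + R_1$ and $C(P_2) = C_2'' + y + R_2$ for the original solution, where $R_1, R_2$ denote the ``remaining'' cost from stage $i$ onwards along each original path. After swapping from stage $i-1$ onwards, the new path costs are $C_1'' + x' + R_2$ and $C_2'' + y' + R_1$, where $x', y'$ are the edge weights of the two new crossing edges at stage $i-1 \to i$ (the ``swapping cost'' edges in \Cref{fig:cbalanceproof}). The key observation is that the two agents now share the same suffix structure from stage $i$ onward (they have simply exchanged their stage-$i$ onwards continuations), so the tail contributions $R_1$ and $R_2$ appear symmetrically: the new envy is
\[
\bigl| (C_1'' + x' + R_2) - (C_2'' + y' + R_1) \bigr|.
\]

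First I would bound $|(C_1'' + x') - (C_2'' + y')|$. We know $\frac{\epsilon}{2} - M \le C_1'' - C_2'' \le \frac{\epsilon}{2}$, and since $x', y' \in [0, M]$ we get $x' - y' \in [-M, M]$, so $(C_1'' + x') - (C_2'' + y')$ lies in $[\frac{\epsilon}{2} - 2M, \frac{\epsilon}{2} + M]$. Next I would bound $R_2 - R_1$: note $R_1 = C(P_1) - C_1'' - x$ and $R_2 = C(P_2) - C_2'' - y$, and combining the original envy $C(P_1) - C(P_2) = \epsilon$ with the stage-$i$ inequality $\frac{\epsilon}{2} < C_1' - C_2' \le \frac{\epsilon}{2} + M$ (equivalently $\frac{\epsilon}{2} < (C_1''+x)-(C_2''+y) \le \frac{\epsilon}{2}+M$) gives $R_2 - R_1 = \epsilon - (C_1' - C_2')$, so $R_2 - R_1 \in [\frac{\epsilon}{2} - M, \frac{\epsilon}{2})$. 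Adding the two bounded quantities, the new envy $(C_1'' + x' + R_2) - (C_2'' + y' + R_1)$ lies in $[\epsilon - 3M, \epsilon + M)$. That alone is not enough, so the argument must also use the other direction: I would symmetrically show the new quantity is bounded below, and crucially combine with the fact that $\epsilon > 2M$ only when the swap happens — but more importantly re-examine which agent ends up more expensive. The careful bookkeeping of signs (whether agent 1 or agent 2 is now the envied one) is what pins the magnitude down to $2M$.

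The main obstacle, then, is the sign/case analysis: after the swap it is not a priori clear which of the two new paths is costlier, and the bound $2M$ on $|{\cdot}|$ must hold in both cases. I expect the clean way is to show the new difference equals $(C_1'' - C_2'') + (x' - y') + (R_2 - R_1)$ and substitute $R_2 - R_1 = \epsilon - (C_1' - C_2')$ to get new difference $= (C_1'' - C_2'') - (C_1' - C_2') + (x'-y') + \epsilon$. Writing $C_1' - C_2' = (C_1'' - C_2'') + (x - y)$, this collapses to $\epsilon + (x' - y') - (x - y)$. Since $x, y, x', y' \in [0, M]$, both $x' - y'$ and $x - y$ lie in $[-M, M]$, so the new difference lies in $[\epsilon - 2M, \epsilon + 2M]$ — this still references $\epsilon$, which can be large, so this particular grouping is the wrong one. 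The right grouping must instead keep $\epsilon/2$ terms: new difference $= (C_1'' - C_2'') + (x' - y') + (R_2 - R_1)$ with $C_1'' - C_2'' \le \epsilon/2$ and $R_2 - R_1 < \epsilon/2$ giving an upper bound of $\epsilon/2 + M + \epsilon/2 = \epsilon + M$ — wrong again. Hence the genuine subtlety: the correct bound must come from observing that $(C_1'' - C_2'') + (R_2 - R_1)$ telescopes, because $R_2 - R_1 = \epsilon - (C_1' - C_2')$ and $\epsilon = C(P_1) - C(P_2)$; substituting and cancelling the full-path terms appropriately yields new difference $= -(x - y) + (x' - y')$ plus a controlled residual, which is bounded by $2M$ in absolute value. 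I would carry out exactly this cancellation carefully, verify it in both sign cases, and that completes the proof.
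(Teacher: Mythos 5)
Your starting decomposition is exactly the paper's: after the swap the two costs are $C_1''+x'+R_2$ and $C_2''+y'+R_1$ with $R_1 = C_1 - C_1'$ and $R_2 = C_2 - C_2'$, and the envy is the absolute value of their difference. The genuine gap is a sign error in the tail term that then derails everything downstream: $R_2 - R_1 = (C_2 - C_2') - (C_1 - C_1') = (C_1'-C_2') - \epsilon$, \emph{not} $\epsilon - (C_1'-C_2')$. With the correct sign, your very first grouping already finishes the proof: $(C_1''+x')-(C_2''+y') \in [\tfrac{\epsilon}{2}-2M,\ \tfrac{\epsilon}{2}+M]$ and $R_2-R_1 \in (-\tfrac{\epsilon}{2},\ M-\tfrac{\epsilon}{2}]$, so the new difference lies in $(-2M, 2M]$ and $\mathcal{E}(S)\le 2M$. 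This is precisely the paper's argument, which bounds $\lvert C_1''-C_2''+C_1'-C_2'-\epsilon\rvert \le M$ (using $\tfrac{\epsilon}{2}-M \le C_1''-C_2'' \le \tfrac{\epsilon}{2}$ and $\tfrac{\epsilon}{2} < C_1'-C_2' \le \tfrac{\epsilon}{2}+M$) and adds $\lvert x'-y'\rvert \le M$; no case analysis on which agent ends up costlier is needed, since the two-sided interval bounds cover both signs at once.

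Because of the flipped sign you instead landed in an interval $[\epsilon-3M,\ \epsilon+M)$ centred near $\epsilon$, concluded (incorrectly) that the natural grouping fails, and then proposed a ``telescoping'' route whose endpoint --- the new difference equal to $-(x-y)+(x'-y')$ plus a ``controlled residual'' --- does not deliver the bound: with the correct algebra that residual is $2(C_1'-C_2')-\epsilon \in (0,2M]$, so the triangle inequality on that grouping only yields $4M$. Likewise, the collapse to $\epsilon+(x'-y')-(x-y)$ is an artifact of the same sign error (the correct collapse is $2(C_1''-C_2'')+(x-y)+(x'-y')-\epsilon$, which also does not give $2M$ by itself). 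As written, the proof is therefore incomplete; the fix is simply to correct the sign of $R_2-R_1$ and keep $(C_1''-C_2'')+(C_1'-C_2')-\epsilon$ together as one bracket, exactly as the paper does.
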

\begin{proof}
From 
\Cref{fig:cbalanceproof}, the envy between paths $P_1$ and $P_2$ after swapping the allocated nodes at stage $i$ is given by 
\begin{align*}
&\lvert C_1'' + x' + C_2 - C_2' - (C_2'' + y' + C_1 - C_1') \rvert
\\
&\le \lvert C_1'' - C_2'' + C_1' - C_2' - \epsilon \rvert + \lvert x' - y' \rvert
\le M + M = 2M .
\qedhere
\end{align*}
\end{proof}

\begin{theorem}
The bound given in 
\Cref{thm:envycbalance}
is tight, i.e., there exists an instance where envy is $2M$.
\label{thm:cbalancetightbound}
\end{theorem}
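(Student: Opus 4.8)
The plan is to construct an explicit BFCMS instance with two agents on which \firstfairalgotwo\ is forced to make a swap that yields envy exactly $2M$. The natural starting point is the worst-case family already exhibited in \Cref{fig:unfairsol_a}: a graph with $K$ stages in which the unique minimum cost solution $S^*=\{P_1,P_2\}$ has $P_1$ accumulating cost $M$ on every edge and $P_2$ accumulating cost $M-\delta$ on every edge, so that $\epsilon = \mathcal{E}(S^*) = (K-1)\delta$ if we rescale, or more simply an instance where the per-stage gap is a convenient constant. I would tune the parameters (the number of stages $K$, the small slack $\delta$, and the ``crossover'' edge weights available at the swap stage) so that the stage $i$ selected by the algorithm — the first stage at which the running cost difference exceeds $\epsilon/2$ — is one at which both inequalities in the displayed bounds are met with equality, i.e. $C_1'' - C_2'' = \epsilon/2 - M$ (actually we want the looser end) and the swapping edges satisfy $x' = M$, $y' = 0$ (or symmetric), so that the two $M$ terms in the proof of \Cref{thm:envycbalance} are simultaneously attained.

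Concretely, I would aim for the following skeleton. Pick $K$ odd so the midpoint stage is unambiguous; give agent $1$'s original path weight $M$ on each edge up to the swap and agent $2$'s original path weight $0$ on each edge, so the running difference jumps by exactly $M$ per stage and the first stage where it exceeds $\epsilon/2$ is the one where the difference goes from $\epsilon/2 - M$ to $\epsilon/2$ — wait, that equals $\epsilon/2$, not strictly greater, so I would perturb one edge by an infinitesimal to make it strictly exceed, or instead design the weights so the jump straddles $\epsilon/2$ non-trivially; the cleanest route is to choose the numbers so that before the swap stage the difference is exactly $\epsilon/2 - M$ and after it is exactly $\epsilon/2 + M$ is impossible in one step, so I will instead settle for difference $\epsilon/2-M$ just below and $\epsilon/2+\eta$ just above with $\eta$ as large as a single edge allows, and then choose the post-swap continuation edges so that after the swap agent (old) $2$ picks up a full $M$ on its crossover edge while agent (old) $1$ picks up $0$. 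Plugging these equalities into the expression $\lvert C_1'' - C_2'' + C_1' - C_2' - \epsilon \rvert + \lvert x' - y'\rvert$ from the proof of \Cref{thm:envycbalance} should then give $M + M = 2M$, certifying tightness. I would also verify that $S^*$ really is the (unique) minimum cost solution on this instance so that \firstfairalgotwo\ receives the intended input, and that $\epsilon > 2M$ so the algorithm does perform a swap rather than returning $S^*$.

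The main obstacle I anticipate is the discreteness mismatch: the algorithm's trigger condition involves a \emph{strict} inequality $C_1' - C_2' > \epsilon/2$ while the tightness analysis wants the ``before'' gap to sit exactly at $\epsilon/2 - M$ and the crossover contributions to be exactly $M$ and $0$; reconciling ``first stage strictly exceeding $\epsilon/2$'' with ``jump of exactly $M$ landing exactly on the boundary'' requires either an $\epsilon$-perturbation argument (construct a family of instances whose envy tends to $2M$ and then argue a single instance attains it by choosing parameters rationally) or a careful choice of non-uniform edge weights along the two paths so the running difference skips cleanly across $\epsilon/2$ with one endpoint at $\epsilon/2-M$. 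I would handle this by working with a concrete small $K$ (say $K=5$ or even $K=3$) and explicit integer-ish weights so the arithmetic is transparent, then present the instance as a figure analogous to \Cref{fig:cbalanceproof}, and conclude by direct computation that the output envy equals $2M$, combined with the upper bound of \Cref{thm:envycbalance} to establish that $2M$ is exactly the tight bound.
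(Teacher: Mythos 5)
Your route differs from the paper's. The paper simply exhibits the tiny instance of \Cref{fig:extremecase} (two agents, three stages) in which \emph{every} valid assignment gives one agent cost $2M$ and the other cost $0$; hence whatever \firstfairalgotwo\ outputs has envy exactly $2M$, and in fact no algorithm can do better on that instance, so tightness is certified without any reference to the algorithm's internal swap logic. You instead try to engineer an instance on which the swap performed by \firstfairalgotwo\ simultaneously saturates the two $M$-terms in the proof of \Cref{thm:envycbalance}. That idea is workable, but as written your proposal has a concrete unresolved flaw: you repeatedly aim at the wrong endpoint of the pre-swap inequality. To make $\lvert C_1''-C_2''+C_1'-C_2'-\epsilon\rvert = M$ you must take $C_1''-C_2''=\frac{\epsilon}{2}$ and $C_1'-C_2'=\frac{\epsilon}{2}+M$ (the negative-sign option $C_1''+C_1'-C_2''-C_2'=\epsilon-M$ is impossible because $C_1'-C_2'>\frac{\epsilon}{2}$ forces $C_1''-C_2''>\frac{\epsilon}{2}-M$). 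Your stated target $C_1''-C_2''=\frac{\epsilon}{2}-M$, and the fallback ``difference $\frac{\epsilon}{2}-M$ just below and $\frac{\epsilon}{2}+\eta$ just above,'' cannot even trigger the swap at that stage, since a single edge changes the gap by at most $M$, so from $\frac{\epsilon}{2}-M$ one can only reach $\frac{\epsilon}{2}$, which does not satisfy the strict condition $C_1'-C_2'>\frac{\epsilon}{2}$. Moreover, no perturbation is needed: the algorithm's condition at stage $i-1$ is non-strict ($C_1''-C_2''\le \frac{\epsilon}{2}$), so sitting exactly at $\frac{\epsilon}{2}$ there is precisely what you want.

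Concretely, your own uniform skeleton completes the argument once aimed at the right endpoint: take $K=5$, give every edge of $P_1$ weight $M$ and every edge of $P_2$ weight $0$ (so $\epsilon=4M>2M$ and the running gap hits exactly $\frac{\epsilon}{2}=2M$ after two edges, then $\frac{\epsilon}{2}+M$ at the trigger stage), and at that transition set the crossover weights to $x'=M$ (from $P_1$'s node to $P_2$'s) and $y'=0$ (the reverse), with all other crossover edges of weight $M$ so that $S^*$ remains a minimum-cost solution (a tie with the single-cross solution is harmless, since the theorem only needs \firstfairalgotwo\ run on $S^*$, and that tied solution itself already has envy $2M$). Direct computation then gives post-swap costs $\frac{\epsilon}{2}+M$ and $\frac{\epsilon}{2}-M$, i.e., envy exactly $2M$. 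So your approach is salvageable and genuinely different, but the submitted version stops short of a verified instance and its fallback step would fail; also note that the paper's instance proves the slightly stronger fact that the minimum achievable envy on some instance is $2M$, not merely that \firstfairalgotwo\ attains $2M$.
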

\begin{proof}
From 
\Cref{fig:extremecase},
if one of the agents is assigned the upper nodes in all the stages and the other agent is assigned the lower nodes, the total cost of the former's path is $M+M=2M$ and that of the latter's path is $0+0=0$, thus resulting in an envy of $2M$.
It can be similarly seen that in all potential path assignments, one of the agents incurs a cost of $2M$ and the other agent incurs a cost of $0$, resulting in an envy of $2M$.
So, the minimum envy achievable for this instance is $2M$, which is achieved by \firstfairalgotwo .
\end{proof}

\begin{theorem}
The cost of fairness for \firstfairalgotwo\ is bounded by $2$, i.e., $CoF(\firstfairalgotwo) < 2$.
\label{thm:cbalancecof}
\end{theorem}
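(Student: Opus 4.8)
The plan is to bound the extra cost incurred by the swap relative to the optimal cost $C(S^*)$. Write $C(S) = C(S^*) + \Delta$, where $\Delta$ is the net change in total cost caused by replacing the two edges at stage $i-1$ (the edges $p_1^{i-1}$ and $p_2^{i-1}$ of cost $x$ and $y$ in the notation of the excerpt) by the two ``swapped'' edges of cost $x'$ and $y'$. Thus $\Delta = (x' + y') - (x + y)$. Since all edge weights are nonnegative and bounded by $M$, we immediately get $\Delta \le x' + y' \le 2M$. So it suffices to show $C(S^*) > 2M$, because then $CoF(\firstfairalgotwo) = C(S)/C(S^*) = 1 + \Delta/C(S^*) < 1 + 2M/C(S^*) < 2$.

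To see that $C(S^*) > 2M$, I would use the fact that the swap is only performed when $\epsilon = \mathcal{E}(S^*) = C(P_1) - C(P_2) > 2M$ (otherwise the algorithm returns $S^*$ unchanged, giving $CoF = 1 < 2$ trivially). Since $C(P_1) - C(P_2) > 2M$ and $C(P_2) \ge 0$, we have $C(P_1) > 2M$, and hence $C(S^*) = C(P_1) + C(P_2) \ge C(P_1) > 2M$. Combining the two displayed inequalities gives $C(S) < C(S^*) + 2M < 2\,C(S^*)$, i.e. $CoF(\firstfairalgotwo) < 2$.

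The main subtlety — and the step I would be most careful about — is confirming that $\Delta \le 2M$ really is the right accounting, i.e. that swapping paths from stage $i-1$ onwards changes the total cost only through the two edges incident to stage $i-1$ that get rerouted. Because the two paths are node-disjoint and the swap exchanges the two tails (the portions from stage $i-1$ to stage $K$) between the agents, every edge strictly after stage $i-1$ is still present in the solution (just assigned to the other agent), and every edge strictly before stage $i-1$ is untouched; only the two crossing edges from stage $i-1$ to stage $i$ are replaced. Hence the cost bookkeeping is exactly $\Delta = x' + y' - x - y \le x' + y' \le 2M$, and no other terms enter. One should also note the degenerate case $i - 1 = 0$ or the case where no valid swap stage exists; in the algorithm as written the loop runs over $i \in [K-1]$ and a valid $i$ is guaranteed to exist when $\epsilon > 2M$ (the cost difference starts at $0 \le \epsilon/2$ and ends at $\epsilon > \epsilon/2$, so it must cross $\epsilon/2$ at some stage), so the swap is always well-defined and the bound $\Delta \le 2M$ applies. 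This completes the argument.
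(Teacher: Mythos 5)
Your proposal is correct and follows essentially the same route as the paper: a swap changes only the two crossing edges, so the cost increases by at most $2M$, and the swap is only performed when the envy exceeds $2M$, which forces $C(S^*) > 2M$ (the paper phrases this contrapositively as the case $C(S^*) \le 2M$ implying no swap), giving $CoF < 1 + 2M/C(S^*) < 2$. Your added bookkeeping that $\Delta = x'+y'-x-y \le 2M$ and the check that a valid swap stage exists just make explicit what the paper asserts in one line.
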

\begin{proof}
When $C(S^*) > 2M$, a single swap at any stage can increase the cost of optimal solution $S^*$ by at most $2M$. So, 
\begin{align*}
CoF(\firstfairalgotwo) &= \frac{C(S^*) + 2M}{C(S^*)} = 1 + \frac{2M}{C(S^*)} < 2.
\end{align*}
When $C(S^*) \le 2M$, we have $\mathcal{E}(S^*) \le 2M$ and hence,  \firstfairalgotwo\ will not perform any swaps. In this case, $CoF(\firstfairalgotwo) = 1$.
\end{proof}

\begin{algorithm}[t]
\DontPrintSemicolon
  
  \KwInput{A multi-stage graph $\mathcal{G} (\{V_i\}_{i=1}^K, E)$, Solution $S^*=\{P_1,P_2,\ldots,P_n\}$ with $C(P_1) \ge C(P_2) \ge \ldots \ge C(P_n)$, Parameter $\alpha$} 
  \KwOutput{Fair solution $S$}

     $\epsilon \gets C(P_1) - C(P_n)$, $S' \gets \{P_1, P_n\}$
     
    \If{$\epsilon > M(2+\alpha)$}
    {
         $P_1, P_n \gets \text{C-Balance}(\mathcal{G}(\{V_i\}_{i=1}^K,E), S')$

         Rearrange agents' numbering such that $C(P_1) \ge C(P_2) \ge \ldots \ge C(P_n)$

         $S \gets \{P_1, P_2, \ldots, P_n\}$

         \firstfairalgo$(\mathcal{G}(\{V_i\}_{i=1}^K,E), S)$
    }

\caption{\firstfairalgo}
\label{alg:dcbalance}
\end{algorithm}

\begin{figure}[t]
    \centering
    \includegraphics[width=0.45\linewidth]{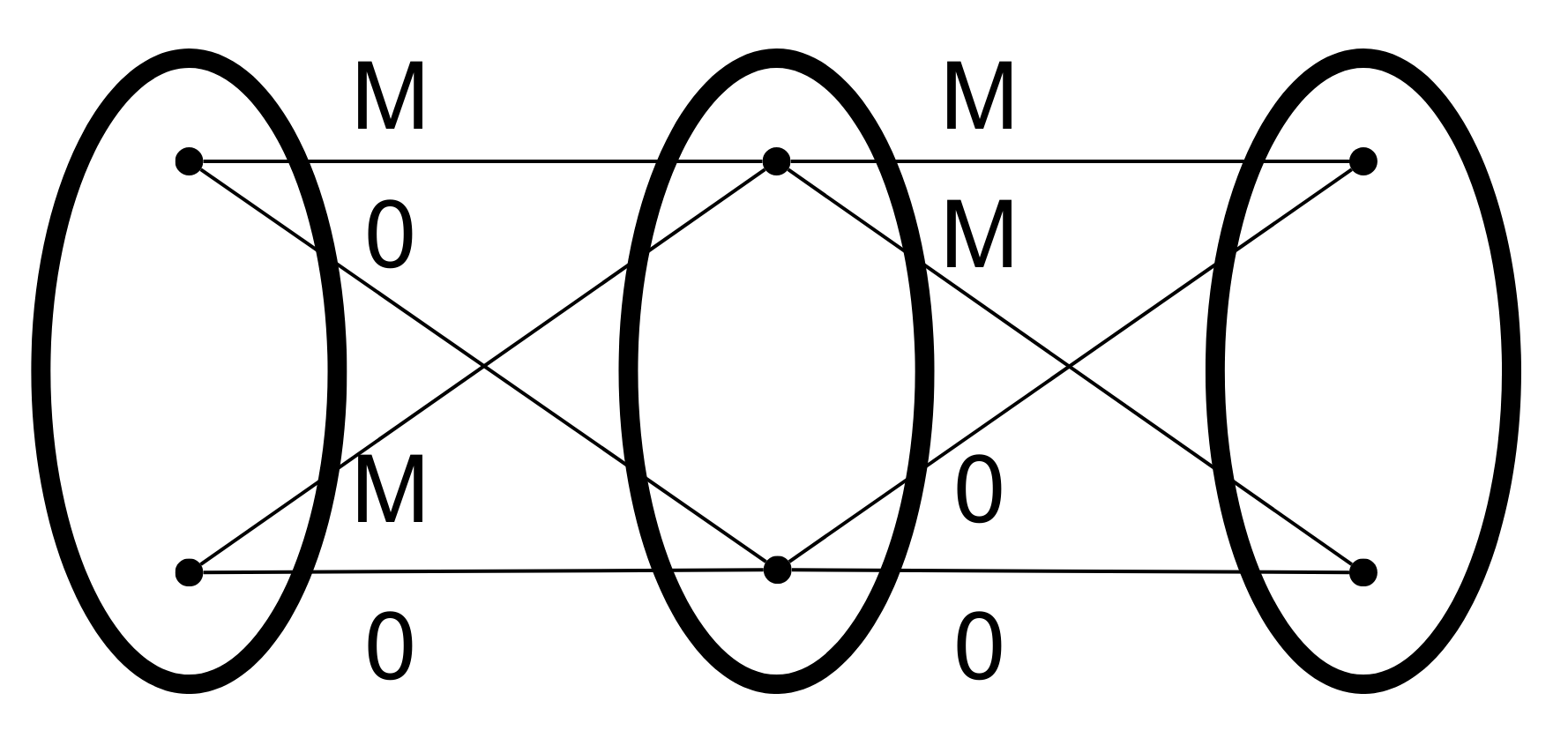}
    \caption{An instance with every possible assignment of paths resulting in an envy of $2M$.}
    \label{fig:extremecase}
\end{figure}

\subsection{Dynamic Cost-Balance Algorithm for \texorpdfstring{$n>2$}{n > 2}}
\label{subsec:DCalgo}

The ideas used in \firstfairalgotwo\ can be extended to the general case with $n$ agents. Here, in each iteration, we choose the agent with the minimum total cost and the agent with the maximum total cost, then perform a swap of assignment between them using the \firstfairalgotwo\ algorithm. This process is repeated until the overall envy
is bounded by $(2 + \alpha)M$, for any given $\alpha > 0$. This algorithm is termed the Dynamic Cost-Balance (\firstfairalgo) algorithm 
[\Cref{alg:dcbalance}].

Similar to the case of $n=2$, let the costs of paths $P_1$ and $P_n$ for the two agents with maximum and minimum costs in this assignment be denoted by $C_1$ and $C_n$, respectively. Let us further denote the costs of paths $P_1$ and $P_n$ up to stage $i-1$ by $C_1''$ and $C_n''$, respectively. The costs of paths $P_1$ and $P_n$ up to stage $i$ are denoted by $C_1' = C_1''+ x$ and $C_n' = C_n'' + y$, respectively. Additionally, we assume that the swapping cost for path $P_1$ is $x'$ and for path $P_n$ is $y'$. 
Similar to the two agents' case, if $\epsilon = C_1-C_n$, the choice of $i$ ensures that
\begin{align}
\frac{\epsilon}{2} - M \le C_1'' - C_n'' \le \frac{\epsilon}{2},  \; \text{and} \;\;
\frac{\epsilon}{2} < C_1' - C_n' \le \frac{\epsilon}{2}+M .
\label{eq:ineq_n}
\end{align}
We then have the following lemma,
which can be proved using the inequalities in (\ref{eq:ineq_n}),
alongwith 
the fact that
$-M \leq w_1-w_2 \leq M,$ $  \forall w_1, w_2 \in \{x, x', y, y'\} \; (\because x, x', y, y' \in [0,M])$.
A step-by-step proof is provided in 
\Cref{app:relaxation}.

\begin{restatable}{lemma}{lemmarelaxation}
If initial envy $C_1-C_n = \epsilon = 2M + 2\delta$, then after a swap, the new costs $C_1^{new}$ and $C_n^{new}$ will follow: $C_n + \delta \le C_1^{new} \le C_1 - \delta$ and $C_n + \delta \le C_n^{new} \le C_1 - \delta$.
\label{lemma:relaxation}
\end{restatable}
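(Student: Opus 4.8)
The plan is to track exactly what the swap at stage $i$ does to the two path costs and then bound the new costs $C_1^{new}, C_n^{new}$ from above by $C_1 - \delta$ and from below by $C_n + \delta$. After the swap at stage $i$, agent $1$ keeps its prefix up to stage $i-1$ (cost $C_1''$), pays the swap edge cost $x'$ to transition into agent $n$'s node at stage $i$, and then follows agent $n$'s remaining suffix, whose cost is $C_n - C_n'$. Symmetrically, agent $n$ ends up with $C_n'' + y' + (C_1 - C_1')$. So I would first write down the two identities
\begin{align*}
C_1^{new} &= C_1'' + x' + C_n - C_n', \\
C_n^{new} &= C_n'' + y' + C_1 - C_1'.
\end{align*}

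Next I would substitute $C_1' = C_1'' + x$ and $C_n' = C_n'' + y$ to rewrite these as $C_1^{new} = C_n + (C_1'' - C_n'') + (x' - y)$ and $C_n^{new} = C_1 - (C_1'' - C_n'') - (x - y')$. For the upper bounds: using $C_1'' - C_n'' \le \epsilon/2 = M + \delta$ (from (\ref{eq:ineq_n})) and $x' - y \le M$ gives $C_1^{new} \le C_n + 2M + \delta$; but $C_n = C_1 - \epsilon = C_1 - 2M - 2\delta$, so $C_1^{new} \le C_1 - \delta$. For $C_n^{new} \le C_1 - \delta$: I need $-(C_1'' - C_n'') - (x - y') \le -\delta$, i.e. $(C_1'' - C_n'') + (x - y') \ge \delta$. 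Here I would use the lower bound $C_1' - C_n' > \epsilon/2 = M + \delta$, i.e. $(C_1'' - C_n'') + (x - y) > M + \delta$, combined with $y - y' \ge -M$, which yields $(C_1'' - C_n'') + (x - y') > \delta$. The two lower bounds $C_1^{new} \ge C_n + \delta$ and $C_n^{new} \ge C_n + \delta$ are obtained the same way by symmetry: for $C_n^{new}$ use $C_1'' - C_n'' \ge \epsilon/2 - M = \delta$ and $x - y' \ge -M$ to get $C_n^{new} = C_1 - (C_1'' - C_n'') - (x - y') \ge C_1 - 2M - \delta = C_n + \delta$; for $C_1^{new} \ge C_n + \delta$ use $C_1' - C_n' > M + \delta$ together with $x' - x \ge -M$.

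The only mild subtlety — and the step I expect to need the most care — is bookkeeping the direction of each of the four inequalities $\pm(w_1 - w_2) \le M$ for $w_1, w_2 \in \{x, x', y, y'\}$, i.e. making sure I invoke the correct one of the prefix-difference bounds ($C_1'' - C_n''$ versus $C_1' - C_n'$) in each of the four target bounds so that the leftover slack is exactly $\delta$ rather than $0$ or $-\delta$. Everything else is a direct substitution, so once the identities for $C_1^{new}$ and $C_n^{new}$ are in place the lemma follows by four nearly identical one-line estimates. I would present the upper-bound pair in full and remark that the lower-bound pair is symmetric, with the detailed computation deferred to \Cref{app:relaxation} as the paper indicates.
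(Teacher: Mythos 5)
Your proposal follows essentially the same route as the paper's own proof in \Cref{app:relaxation}: the same two identities $C_1^{new} = C_1'' + x' + C_n - C_n'$ and $C_n^{new} = C_n'' + y' + C_1 - C_1'$, bounded term by term using the inequalities in (\ref{eq:ineq_n}) together with $-M \le w_1 - w_2 \le M$ for $w_1, w_2 \in \{x, x', y, y'\}$. Three of your four estimates (both upper bounds and the lower bound on $C_1^{new}$) are correct as stated, and your derivation of $C_n^{new} \le C_1 - \delta$ via $C_1' - C_n' > \frac{\epsilon}{2}$ and $y - y' \ge -M$ is in fact more explicit than the paper's compressed step for the same bound.

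The one slip is in the lower bound $C_n^{new} \ge C_n + \delta$ --- precisely the direction bookkeeping you flagged as the delicate part. In the identity $C_n^{new} = C_1 - (C_1'' - C_n'') - (x - y')$, both bracketed quantities enter with a minus sign, so you need \emph{upper} bounds on them, namely $C_1'' - C_n'' \le \frac{\epsilon}{2} = M + \delta$ and $x - y' \le M$; the inequalities you cite ($C_1'' - C_n'' \ge \frac{\epsilon}{2} - M$ and $x - y' \ge -M$) point the wrong way and by themselves only give an upper bound of the form $C_n^{new} \le C_1 - \delta + M$, not the claimed lower bound. With the corrected citations, the computation you wrote, $C_n^{new} \ge C_1 - (M + \delta) - M = C_1 - 2M - \delta = C_n + \delta$, goes through (this is exactly the paper's step using $C_n'' \ge C_1'' - \frac{\epsilon}{2}$ and $y' - x \ge -M$), so the fix is immediate and the lemma is fully established.
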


We now analyze the number of swaps required for \firstfairalgo\ to achieve an envy of $(2+\alpha)M$ for any $\alpha>0$, and hence provide a bound on its $CoF$.

\begin{restatable}{theorem}{thmnumiters}
Given that the minimum cost solution results in envy of $\mathcal{E}(S^*)$, the number of swaps required for \firstfairalgo\ to achieve an envy of $(2+\alpha)M$ for any $\alpha>0$, in the worst case, is 
$ \left\lfloor \frac{n}{2} \right\rfloor \left\lceil \log_2 \left( \frac{\mathcal{E}(S^*)-2M}{\alpha M}\right) \right\rceil$, which is $O\left(n\log\left(\frac{K}{\alpha}\right)\right)$.
\label{thm:num_iters}
\end{restatable}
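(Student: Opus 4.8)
The plan is to bound the number of swaps by analyzing how a single swap (a call to \firstfairalgotwo\ on the current max-cost and min-cost paths) shrinks the spread of path costs, and then to count how many such shrinking steps are needed before the envy drops to $(2+\alpha)M$. First I would set up the "outer loop" accounting: in each iteration of \firstfairalgo\ we pick the agent $P_1$ with the largest cost and the agent $P_n$ with the smallest cost, so the current envy is $\epsilon = C_1 - C_n$. Writing $\epsilon = 2M + 2\delta$ (so $\delta = (\epsilon - 2M)/2 > 0$ whenever a swap happens), \Cref{lemma:relaxation} tells us that after the swap both new costs lie in the interval $[C_n + \delta,\ C_1 - \delta]$; that is, both endpoints of the cost range contract inward by $\delta$ each. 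The key observation is that a swap does not move any of the other $n-2$ agents' costs, and all of them already lay in $[C_n, C_1] \supseteq [C_n+\delta, C_1-\delta]$ only in the sense that $C_n \le C(P_j) \le C_1$ — so after the swap the min over all agents is at least $C_n$ and at most $C_n+\delta$... wait, more carefully: the minimum could still be some untouched agent sitting near $C_n$. So one swap is not guaranteed to raise the global minimum; this is exactly why we need the factor $\lfloor n/2 \rfloor$.

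The main idea to handle this is a "round" argument: group the swaps into rounds of $\lfloor n/2 \rfloor$ swaps each, and argue that after one full round the width $C_{\max} - C_{\min}$ of the cost range has dropped from $2M + 2\delta$ to at most $2M + \delta$, i.e. the "excess" $\delta$ over $2M$ halves per round. To see this, I would argue that within a round, repeatedly pulling in the current extreme pair by $\delta$ each cannot leave any agent outside $[C_n + \delta, C_1 - \delta]$ once every agent has participated (or been dominated) — more precisely, after $\lfloor n/2\rfloor$ swaps every agent's cost has been replaced by (or is sandwiched between) values in the contracted interval, because each swap at minimum removes the current extreme point and each extreme point can be "blamed" on at most one agent per round. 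Hence the new envy is at most $2M + \delta = 2M + \frac12(\epsilon - 2M)$, so the quantity $\mathcal{E} - 2M$ is at least halved every $\lfloor n/2\rfloor$ swaps. Starting from $\mathcal{E}(S^*) - 2M$ and stopping once $\mathcal{E} - 2M \le \alpha M$, we need $\lceil \log_2\!\big(\tfrac{\mathcal{E}(S^*) - 2M}{\alpha M}\big)\rceil$ rounds, giving the stated $\lfloor n/2\rfloor \lceil \log_2(\cdot)\rceil$ swaps.

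To finish, I would convert this into the asymptotic form $O(n\log(K/\alpha))$: since every edge weight is at most $M$ and a path uses $K-1$ edges, $C(P_1) \le (K-1)M$ and $C(P_n) \ge 0$, so $\mathcal{E}(S^*) \le (K-1)M$ and hence $\tfrac{\mathcal{E}(S^*) - 2M}{\alpha M} \le \tfrac{(K-1)M}{\alpha M} = \tfrac{K-1}{\alpha}$; thus $\lceil \log_2(\tfrac{\mathcal{E}(S^*)-2M}{\alpha M})\rceil = O(\log(K/\alpha))$ and the total is $O(n\log(K/\alpha))$.

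The step I expect to be the real obstacle is the per-round contraction claim: showing that after $\lfloor n/2 \rfloor$ swaps \emph{all} agents — not just the pairs that were touched — have costs within the interval $[C_{n} + \delta,\ C_1 - \delta]$ measured against the \emph{round's starting} extremes. The subtlety is that an untouched low-cost agent from early in the round might still be the global minimum at the round's end, and that the extremes themselves shift as the round progresses (so "$\delta$" is not literally constant within the round). I would handle this by tracking the invariant that the maximum cost is non-increasing and the minimum cost is non-decreasing across swaps within a round (each swap's two new costs are bounded by the pre-swap extremes, and untouched agents are unchanged), and then arguing a counting bound: each swap strictly removes one of the current "boundary-achieving" agents from the boundary, and there can be at most $\lfloor n/2 \rfloor$ such boundary agents to clear before both boundaries have moved in by the required amount — taking the worst case over how the untouched agents are distributed. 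If a cleaner direct argument proves elusive, the fallback is to prove the weaker but sufficient statement that a full round of $\lfloor n/2\rfloor$ swaps reduces $\mathcal{E} - 2M$ by a factor of at least $2$, which is all the theorem needs.
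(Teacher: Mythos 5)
Your overall architecture matches the paper's: you invoke \Cref{lemma:relaxation} for the per-swap contraction of the extreme pair, group swaps into rounds of at most $\lfloor n/2\rfloor$, claim the excess envy over $2M$ halves per round, and finish with the same count $\lfloor n/2\rfloor\lceil\log_2((\mathcal{E}(S^*)-2M)/(\alpha M))\rceil$ and the bound $\mathcal{E}(S^*)\le (K-1)M$ for the $O(n\log(K/\alpha))$ form. However, the step you yourself flag as the obstacle --- the per-round contraction --- is exactly the substance of the paper's proof, and the route you sketch for it does not work. Your target claim, that after $\lfloor n/2\rfloor$ swaps \emph{every} agent's cost lies in the contracted interval $[C_n+\delta,\,C_1-\delta]$ measured from the round's start, is false in general: an agent that is never touched during the round (because some other agent always sits below it) can end the round at, say, $C_n+\delta/2$, strictly outside that interval. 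Likewise, the counting heuristic ``each swap removes one boundary-achieving agent, and there are at most $\lfloor n/2\rfloor$ of them'' does not by itself bound the envy at the round's end, and your fallback (``prove that a round halves $\mathcal{E}-2M$'') is simply a restatement of the claim to be proved.

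The missing idea, which the paper supplies via a case analysis, is to classify the \emph{current} extreme pair $(s,t)$ according to whether each has already been adjusted in the round, and to prove that as soon as the pair is not entirely fresh, the envy has \emph{already} dropped to at most $\frac{\epsilon^{(r-1)}}{2}+M$ --- so the round is declared over at that point, and only swaps between two fresh agents are charged to the round, of which there can be at most $\lfloor n/2\rfloor$. The key fact making this work is one you did not use: whenever an agent was swapped, it was (by the algorithm's choice) one of the two global extremes at that moment, so \emph{every} other agent's cost at that moment lay between the pair's costs; combined with \Cref{lemma:relaxation} (the adjusted agent's new cost is within $M$ of the pair's midpoint) and your correct monotonicity observation (max non-increasing, min non-decreasing, hence the pair's spread is at most $\epsilon^{(r-1)}$), any later extreme that is still untouched is sandwiched by that pair, giving a gap of at most $\frac{\epsilon^{(r-1)}}{2}+M$; the cases where both new extremes were adjusted reduce to the same bound. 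Without this sandwiching argument your proof of the round-halving lemma has a genuine gap, even though the rest of the derivation (the recursion $\epsilon\mapsto\frac{\epsilon}{2}+M$, the round count, and the asymptotics) is correct and identical to the paper's.
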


\begin{proof}
Define one round as the set of swaps that reduces envy from $\epsilon$ to at most $\frac{\epsilon}{2}+M$. 
As \firstfairalgo\ performs a swap only when $\epsilon > 2M$, we have $\frac{\epsilon}{2}+M < \epsilon$. 
Further, 
denote envy at the beginning of a round $r$ to be $\epsilon^{(r-1)}$. So, at the beginning of the first round, if $C_1^0, C_2^0,..., C_n^0$ are the agents' costs in decreasing order, we have that the two extreme agents are $1$ and $n$, and $\epsilon^{(0)} = \mathcal{E}(S^*) = C_1 - C_n$.

Now, if at any round $r$, two agents $i,j$ are chosen by \firstfairalgo\ to perform a swap, then it must satisfy $C_j^{(r-1)} - C_i^{(r-1)} > 2M$, as the algorithm would have terminated otherwise. Thus, 
from
\Cref{lemma:relaxation}
and considering $2\delta = C_j^{(r-1)} - C_i^{(r-1)} - 2M$, we have 
$C_i^{(r)}  , C_j^{(r)} \in [C_i^{(r-1)} + \delta, C_j^{(r-1)} - \delta]$, i.e.,
\begin{align*}
& C_i^{(r)},C_j^{(r)} \in \left[\frac{C_i^{(r-1)} + C_i^{(r-1)}}{2} - M, \frac{C_i^{(r-1)} + C_j^{(r-1)}}{2} + M \right] .
\end{align*}

We mark agents $i, j$ as `adjusted in round $r$', and denote the set of all agents adjusted in round $r$ by $A_r$. Post swap, let us denote the two extreme agents as $s$ and $t$,
where $s$ is the agent with the current lowest cost, and $t$ is the agent with the current highest cost.
Then one of the following cases results: \\
[3pt]
\textbf{Case 1}
($s=i$, $t=j$): \firstfairalgo\ terminates without further swaps as $\left\vert C_t^{(r)}-C_s^{(r)} \right\vert \le 2M$ from 
\Cref{thm:envycbalance}.
\\
[3pt]
\textbf{Case 2(a)} ($s\in A_r$, $t\notin A_r$):
Let the last agent with whom $s$ swapped its path be $u$. Thus, at the time of the swap between agents $s$ and $u$, we must have cost of all agents in $\left[ C_{\min(s,u)}^{(r-1)}, C_{\max(s,u)}^{(r-1)} \right]$, where $C_{\min(s,u)}^{(r-1)} = \min \{ C_{s}^{(r-1)}, C_{u}^{(r-1)}\}$ and $C_{\max(s,u)}^{(r-1)}=\max \{ C_{s}^{(r-1)},C_{u}^{(r-1)}\}$. Hence, $C_{t}^{(r-1)} \leq C_{\max(s,u)}^{(r-1)}$. We get,
\allowdisplaybreaks
\begin{align*}
C_{t}^{(r-1)} - C_s^{(r)} 
& \leq 
C_{\max(s,u)}^{(r-1)} - C_s^{(r)} 
\\
& \leq C_{\max(s,u)}^{(r-1)} - \left(\frac{C_{\min(s,u)}^{(r-1)} + C_{\max(s,u)}^{(r-1)}}{2} - M\right) 
\\ &
\leq \frac{C_{\max(s,u)}^{(r-1)} - C_{\min(s,u)}^{(r-1)}}{2} + M
\leq \frac{\epsilon^{(r-1)}}{2} + M .
\end{align*} 
\textbf{Case 2(b)} 
($s\notin A_r$, $t\in A_r$):
 Let the agent with whom $t$ swapped its path be $v$. Similar to above, $C_{s}^{(r-1)} \geq C_{\min(t,v)}^{(r-1)}$ and hence,
 \begin{align*}
     C_{t}^{(r)}-C_s^{(r-1)} \leq \frac{\epsilon^{(r-1)}}{2} + M .
 \end{align*}
\textbf{Case 2(c)} ($s,t\in A_r$ and $s$ was adjusted before $t$): 
Here, $C_{\max(s,u)}^{(r-1)} \geq C_{\max(t,v)}^{(r-1)}$.
Also, we know that $C_{t}^{(r)} < C_{\max(t,v)}^{(r-1)}$. So,
\begin{align*}
C_{t}^{(r)} \!- C_s^{(r)}
& \!\leq C_{\max(t,v)}^{(r-1)} \!- C_s^{(r)}
\!\leq C_{\max(s,u)}^{(r-1)} \!- C_s^{(r)}
\!\leq \frac{\epsilon^{(r-1)}}{2} \!+\! M .
\end{align*}
The last inequality is obtained on similar lines as  Case 2(a).
\\
[3pt]
\textbf{Case 2(d)} ($s,t\in A_r$ and $t$ was adjusted before $s$):
Just as the expression in Case 2(c) reduces to that in Case 2(a), it can be shown that the expression here reduces to that in Case 2(b), implying 
\begin{align*}
    C_{t}^{(r)}-C_s^{(r)} \leq \frac{\epsilon^{(r-1)}}{2} + M .
\end{align*}

Since in all the sub-cases of Case 2, the envy reduces, from $\epsilon^{(r-1)}$ at the start of round $r$, to at most $\frac{\epsilon^{(r-1)}}{2}+M$, round $r$ concludes if we reach Case 2, and round $r+1$ begins.
\\
[3pt]
\textbf{Case 3} ($s,t\notin A_r$):
These two agents' paths are swapped, and $A_r = A_r \cup \{s,t\}$. 
\\
[3pt]
It should be noted that each round $r$ can enter Case 3 for at most $\left\lfloor \frac{n}{2} \right\rfloor$ swaps, otherwise, the algorithm is either terminating (Case 1) or entering the next round (Case 2).
Hence, the bound on envy becomes $\frac{\mathcal{E}(S^*)}{2}+M$ after at most $\left\lfloor \frac{n}{2} \right\rfloor$ swaps, $\frac{\frac{\mathcal{E}(S^*)}{2}+M}{2}+M = \frac{\mathcal{E}(S^*)}{4} + M(1+\frac{1}{2})$ after at most $2\left\lfloor \frac{n}{2} \right\rfloor$ swaps, and so on. 
In general, 
the bound on envy after at most $p\left\lfloor \frac{n}{2} \right\rfloor$ swaps
becomes 
\begin{align*}
 \frac{\mathcal{E}(S^*)}{2^p}+M \sum_{\gamma=1}^p \frac{1}{2^{\gamma-1}} 
 \;\text{, i.e., }\;
\frac{\mathcal{E}(S^*)}{2^p}+2M \left( 1-\frac{1}{2^p} \right) .
\end{align*}

We now obtain the value of $p$ for which the envy of $(2+\alpha)M$ is achieved, by equating  $\frac{\mathcal{E}(S^*)}{2^p} + 2M \left( 1-\frac{1}{2^p} \right)$ to $(2+\alpha)M$. 
We get $p = \left\lceil \log_2 \left( \frac{\mathcal{E}(S^*)-2M}{\alpha M} \right) \right\rceil$, as $p$ is an integer.
Hence, the number of swaps required in the worst case is 
$\left\lfloor \frac{n}{2} \right\rfloor \left\lceil \log_2 \left( \frac{\mathcal{E}(S^*)-2M}{\alpha M}\right) \right\rceil$.

Since the cost of any agent is bounded by $(K-1)M$, we have $\mathcal{E}(S^*) \leq (K-1)M$. 
Hence, the number of swaps in the worst case is bounded by $\left\lfloor \frac{n}{2} \right\rfloor \left\lceil \log_2 \left( \frac{(K-3)M}{\alpha M}\right) \right\rceil$, which is
$O\left( n \log \left( \frac{K}{\alpha} \right) \right)$.
\end{proof}

Recall that each swap in \firstfairalgo\ results in a maximum increase of $2M$ in the overall cost. 
Hence, from 
\Cref{thm:num_iters},
the increase in the overall cost due to the deviation from the minimum cost assignment is upper bounded by $2M\left\lfloor \frac{n}{2} \right\rfloor \left\lceil \log_2 \left( \frac{\mathcal{E}(S^*)-2M}{\alpha M}\right) \right\rceil$.
That is, in effect, \firstfairalgo\
provides 
bi-criteria additive approximation guarantees, namely, additive approximation of $(2+\alpha)M$ on envy and  $2M \left\lfloor \frac{n}{2} \right\rfloor \left\lceil \log_2  \left( \frac{\mathcal{E}(S^*)-2M}{\alpha M}\right) \right\rceil$ on cost.

Note that \firstfairalgo\ would perform any swaps only if $\mathcal{E}(S^*) > (2+\alpha)M$, which is possible only if the cost of at least one agent exceeds $(2+\alpha)M$. 
Further, it is possible to have $\mathcal{E}(S^*) > (2+\alpha)M$ only if $C(S^*) > (2+\alpha)M$.
If no swaps are performed, $CoF = 1$. 
The following result with regard to the bound on $CoF$ is immediate.

\begin{theorem}
The CoF of \firstfairalgo, for achieving an envy of $(2+\alpha)M$ for any $\alpha>0$, is bounded by $1 + \frac{2M\left\lfloor \frac{n}{2} \right\rfloor \left\lceil \log_2 \left( \frac{\mathcal{E}(S^*)-2M}{\alpha M}\right) \right\rceil}{C(S^*)}$, which is $O\left(n\log\left(\frac{K}{\alpha}\right)\right)$.
\label{thm:CoF}
\end{theorem}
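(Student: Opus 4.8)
\textbf{Proof proposal for \Cref{thm:CoF}.}

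The plan is to simply chain together the two facts that are already in hand: the per-swap cost increase and the worst-case swap count. First I would recall that \firstfairalgo\ only enters its main body when $\mathcal{E}(S^*) > (2+\alpha)M$, and that this in turn forces $C(S^*) > (2+\alpha)M > 0$, so the ratio $C(S_{\firstfairalgo})/C(S^*)$ is well-defined and the degenerate case (no swaps performed) gives $CoF = 1$, which trivially satisfies the claimed bound. This disposes of the boundary case and lets me assume swaps actually occur.

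Next, for the non-degenerate case, I would invoke the observation preceding the theorem: each individual swap performed by \firstfairalgotwo\ (called as a subroutine inside \firstfairalgo) replaces at most two edges on each of the two swapped paths by at most two other edges, so the total cost rises by at most $2M$ per swap. Combining this with \Cref{thm:num_iters}, which bounds the number of swaps in the worst case by $\left\lfloor \frac{n}{2} \right\rfloor \left\lceil \log_2 \left( \frac{\mathcal{E}(S^*)-2M}{\alpha M}\right) \right\rceil$, the total additive cost increase is at most $2M\left\lfloor \frac{n}{2} \right\rfloor \left\lceil \log_2 \left( \frac{\mathcal{E}(S^*)-2M}{\alpha M}\right) \right\rceil$. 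Therefore
\begin{align*}
CoF(\firstfairalgo) = \frac{C(S_{\firstfairalgo})}{C(S^*)} \le \frac{C(S^*) + 2M\left\lfloor \frac{n}{2} \right\rfloor \left\lceil \log_2 \left( \frac{\mathcal{E}(S^*)-2M}{\alpha M}\right) \right\rceil}{C(S^*)} = 1 + \frac{2M\left\lfloor \frac{n}{2} \right\rfloor \left\lceil \log_2 \left( \frac{\mathcal{E}(S^*)-2M}{\alpha M}\right) \right\rceil}{C(S^*)},
\end{align*}
which is the first claimed expression.

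Finally, to get the asymptotic form, I would bound the instance parameters: $\mathcal{E}(S^*) \le (K-1)M$ so the $\log_2$ term is $O(\log(K/\alpha))$, and $\left\lfloor n/2 \right\rfloor = O(n)$, giving a numerator of $O\!\left(Mn\log(K/\alpha)\right)$. Since a swap is only ever performed when $C(S^*) > (2+\alpha)M = \Omega(M)$, the ratio $\frac{2M\lfloor n/2\rfloor \lceil \cdots \rceil}{C(S^*)}$ is $O\!\left(n\log(K/\alpha)\right)$, and hence so is $CoF(\firstfairalgo)$.

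I do not expect any real obstacle here; the theorem is explicitly flagged as ``immediate'' from \Cref{thm:num_iters} and the per-swap cost bound. The only point needing a little care is the denominator in the asymptotic step — one must note that $C(S^*)$ is bounded below by a constant multiple of $M$ precisely in the regime where swaps happen, so that the $M$ in the numerator cancels and the bound depends only on $n$, $K$, and $\alpha$; in the complementary regime $CoF = 1$ and there is nothing to prove.
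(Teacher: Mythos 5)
Your proposal is correct and follows essentially the same route as the paper: the bound is obtained by combining the $2M$ per-swap cost increase with the swap count from \Cref{thm:num_iters}, handling the no-swap case ($CoF=1$) separately, and using $\mathcal{E}(S^*)\le (K-1)M$ together with $C(S^*) > (2+\alpha)M$ in the swapping regime to reach the $O\left(n\log\left(\frac{K}{\alpha}\right)\right)$ form. The only slight imprecision is your description of a swap replacing ``at most two edges on each path''---in fact one edge per path is replaced---but this does not affect the $2M$ per-swap bound or the conclusion.
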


That is, the $CoF$ bound of \firstfairalgo\ approximately increases with the logarithm of the ratio of the envy in the minimum cost assignment to the desired degree of envy.
It also increases with the maximum edge weight 
and the number of agents.
While the above expression of $CoF$ bound indicates a dependency on $n$, this dependency
is replaceable with that on the minimum edge weight, say $m$, if $m>0$. This is because, since $C(S^*) \geq m n
\geq 2 m \left\lfloor \frac{n}{2} \right\rfloor
$, the above $CoF$ bound is further bounded by $1 + \frac{M }{m } \left\lceil \log_2 \left( \frac{\mathcal{E}(S^*)-2M}{\alpha M} \right) \right\rceil$,
which is $O\left(\frac{M}{m}\log\left(\frac{K}{\alpha}\right)\right)$.
We now discuss the dependency of the CoF bound on instance-specific parameters.

\subsection{A Note on Dependency of the \texorpdfstring{$CoF$}{CoF} Bound on Instance-Specific Parameters}

Consider an instance where, in the minimum cost assignment, all the edges in the path of one of the agents have an arbitrarily small weight of $\gamma > 0$, while the paths of the other $(n-1)$ agents have edges with zero weight; 
an edge connecting a node on the non-zero-cost path with a node on a zero-cost path has a large weight of $M$, while an edge connecting a node on a zero-cost path with that on another zero-cost path has zero weight. For this instance, the minimum cost assignment leads to an overall cost as well as envy of $(K-1)\gamma$.

Consider $n = 2$ and $K \geq 2$.
If $K$ is even, the envy can be reduced to zero by making one swap at the middlemost stage as the cost of either of the agents would be $\left(\frac{K-2}{2}\right)\gamma+M$. 
If $K$ is odd, the envy can be reduced to $\gamma$ by making one swap at the middlemost stage, rounded to an integer; 
here the cost of one agent would be $\left\lfloor \frac{K-2}{2}\right\rfloor \gamma+M$ and of the other would be $\left\lceil \frac{K-2}{2}\right\rceil \gamma+M$. 
In both cases, the overall cost would be 
$(K-2)\gamma+2M$,
resulting in a CoF of
$
    \frac{(K-2)\gamma+2M}{(K-1)\gamma} ,
$
which could be arbitrarily high depending on $M$.
In general, an envy-minimizing assignment in such instances leads to a $CoF$ that 
is dependent on instance-specific paramaters.
A detailed explanation for $n \geq 3$ and $K \geq 2 n + 1$ is provided in 
\Cref{app:NoCoFbound}.
We, hence,
have the following observation.

\begin{observation}
It is infeasible to provide a constant instance-independent bound on $CoF$ for minimizing envy.
\label{obs:noinstind}
\end{observation}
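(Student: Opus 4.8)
The plan is to exhibit a parametrized family of BFCMS instances for which the $CoF$ of any envy-minimizing assignment grows without bound, so that no fixed constant can serve as an instance-independent upper bound. The excerpt already sketches exactly such a family for $n=2$, so my job in the proof is to (i) describe the construction cleanly, (ii) verify the minimum cost and envy of $S^*$, (iii) verify that every envy-minimizing assignment must incur at least one "expensive" crossing of weight $M$, and (iv) take the ratio and let the free parameters diverge.

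Concretely, I would first fix the instance: a BFCMS graph with $n$ agents and $K$ stages, designate one distinguished node per stage forming a "non-zero path" where every incoming edge has weight $\gamma>0$, while all edges among the remaining (``zero'') nodes have weight $0$, and every edge joining a distinguished node to a non-distinguished node has weight $M$. Then $C(S^*)=(K-1)\gamma$, achieved by routing one agent along the distinguished nodes and the other $n-1$ agents entirely through zero nodes, and $\mathcal{E}(S^*)=(K-1)\gamma$ as well. Next I would argue the lower bound on the cost of any \emph{fair} solution: in any assignment whose envy is strictly below $(K-1)\gamma$, the agent who starts on the distinguished node of $V_1$ cannot remain on distinguished nodes for all $K$ stages without accumulating cost $(K-1)\gamma$ (and some other agent cost $0$, giving envy $(K-1)\gamma$), so \emph{some} agent's path must switch between a distinguished node and a zero node at least once, and each such switch traverses an edge of weight $M$. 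Hence $C(S_{\text{fair}}) \ge (K-1-\text{const})\gamma + cM$ for a positive constant $c\ge 1$ (for $n=2$, $K$ even, a single middle swap achieves cost exactly $(K-2)\gamma + 2M$ with envy $0$, and this is the cheapest fair option, so equality is attained). Therefore
\begin{align*}
CoF \;\ge\; \frac{(K-2)\gamma + 2M}{(K-1)\gamma} \;=\; \frac{K-2}{K-1} + \frac{2M}{(K-1)\gamma},
\end{align*}
and choosing, say, $\gamma = 1/M^2$ (or simply letting $M/\gamma \to \infty$) drives this ratio to infinity, which is impossible if a constant instance-independent bound $B$ existed. This contradiction establishes the observation; for $n\ge 3$ and $K\ge 2n+1$ the same reasoning applies to the analogous construction, with the bookkeeping on the number of forced $M$-edges deferred to the appendix as the paper does.

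The main obstacle is the step (iii): rigorously showing that \emph{every} envy-minimizing assignment is forced to pay at least one edge of weight $M$, i.e.\ that the all-$\gamma$-path/all-zero-path solution $S^*$ is \emph{uniquely} worst-case on envy and cannot be matched by any zero-$M$-edge alternative. This requires checking that any node-disjoint routing that avoids all $M$-edges is essentially a relabeling of $S^*$ (the distinguished nodes form one path, the zero nodes partition into the other paths), hence has the same maximal envy $(K-1)\gamma$; whereas any routing with envy $< (K-1)\gamma$ must use $\ge 1$ crossing edge. One should also confirm the matching optimum — that $(K-2)\gamma + 2M$ (for $n=2$, $K$ even) is genuinely the minimum-cost fair solution and not beaten by a more clever multi-crossing route — but since additional crossings only add more $M$'s and $\gamma$ is tiny, a single balanced crossing is clearly optimal among fair solutions. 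I would keep these verifications lightweight in the main text and lean on the appendix reference already present for the general-$n$ case, since \Cref{obs:noinstind} only needs the \emph{existence} of one diverging family.
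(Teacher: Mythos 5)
Your proposal is correct and follows essentially the same route as the paper: the identical family of instances (one $\gamma$-weight path, $(n-1)$ zero-weight paths, $M$-weight crossing edges), the same observation that $C(S^*)=\mathcal{E}(S^*)=(K-1)\gamma$ while any low-envy assignment must pay $M$-edges, and the same ratio $\frac{(K-2)\gamma+2M}{(K-1)\gamma}$ driven to infinity by letting $M/\gamma$ grow, with the $n\ge 3$ bookkeeping deferred to the appendix exactly as the paper does. Your extra care in step (iii) — arguing that any assignment avoiding all $M$-edges is a relabeling of $S^*$ and hence has envy $(K-1)\gamma$ — is a sound and easily verified tightening of what the paper asserts by construction, not a different approach.
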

\noindent Note that when $\alpha < \frac{\omega}{M}$, with $\omega>0$ being the minimum non-zero edge weight, then envy of \firstfairalgo\ would always be less then $2M$. This is because each swap leads to discrete changes in agents' cost due to discretized nature of edge weights.

\subsection{Extension to General FCMS Graphs}
\label{sec:unequal_stages}
In a general FCMS graph 
where some/all stages may have 
more nodes than the number of agents, consider the subgraph induced by the $n$ node-disjoint paths that result in the minimum total cost (obtained using the adaptation of Suurballe’s algorithm described earlier). Clearly, each of these $n$ paths contains exactly one node from each stage, resulting in the induced subgraph being a BFCMS graph.
Now, 
\firstfairalgo\ can be applied on this
induced BFCMS graph 
while ignoring all the nodes and edges not present in 
it.
So, the resultant envy and total cost of our algorithm in the given FCMS graph are same as that in the induced BFCMS graph. 
Also, by construction, a minimum cost assignment in the induced BFCMS graph is also a minimum cost assignment in the given FCMS graph. 
Hence, our analyses on envy bound and $CoF$ are directly applicable.

Note that envy-minimizing assignments in the given FCMS graph and the induced BFCMS graph could be different. However, our approach provides a solution with bounded envy while ensuring a bound on $CoF$ and is not aimed at minimizing envy, which is NP-hard
(Theorems~\ref{thm:nphard_numagents} and \ref{thm:nphard_numlevels}). 
Therefore, a solution obtained in the induced BFCMS graph is also a solution in the original FCMS graph.

\subsection{A Note on Minimax Share (MMS)}

 In  applications like supply chain, vehicle routing problem, etc., it is desirable to 
 minimize the maximum cost incurred by any agent. This is often termed \emph{makespan} in the scheduling literature. 
It can be shown that the difference between 
the maximum cost of an agent as provided by \firstfairalgo\ and that of the optimal assignment minimizing the maximum cost over the agents, is bounded by $O\left(M\log\left(\frac{K}{\alpha}\right)\right)$.
A detailed discussion is provided in 
\Cref{app:mms}.

\section{Experimental Evaluation}
\label{sec:experiments}

We now present an evaluation of the performance of the \firstfairalgo\ algorithm. 
For our experiments, we also employed a modified version of \firstfairalgo, called the Extended Dynamic Cost-Balance (\extfairalgo) algorithm, with the difference that we continue swapping while envy decreases. 
While the aforementioned adaptation of Suurballe’s
algorithm provides a minimum cost solution for a general FCMS graph, for the case of BFCMS graph, a minimum cost algorithm (\genmincostalgo) can be implemented using a relatively simpler approach which involves invoking Hungarian algorithm on each of the $K-1$ elementary 2-stage assignment problems; the details are provided in 
\Cref{app:seqhung}.
We validate our theoretical results through simulations, comparing them to results obtained from an Integer Linear Programming (ILP) formulation minimizing the cost of the solution subject to the condition that envy is no more than $2M$ for fair comparison, 
since our algorithms return a solution with envy less than or equal to $2M$. One could also run the ILP that minimizes the envy, but that would not be a fair comparison as such an ILP would result in significantly high cost solutions as compared to ours.
The ILP formulation is provided in 
\Cref{app:ILP}. 
We examined the impact of the number of agents and number of stages on the final envy ratio \big($\frac{\mathcal{E}(S)}{M}$\big), cost of fairness ($CoF$), and running time for the \firstfairalgo, \extfairalgo, \genmincostalgo, and ILP on a synthetic dataset to validate our theoretical findings.

All the experiments were run on Intel Xeon Gold $6246$R CPU @ $3.40$GHz with $384$GB RAM, and Python $3.6$.
To solve the ILP, Gurobi Optimizer v11 was used.
To generate the dataset, we created a collection of BFCMS graphs for $10$ agents, with the number of stages varying from $20$ to $80$ in increments of $5$. Additionally, we created a collection of BFCMS graphs with $40$ stages and the number of agents varying from $2$ to $20$ in increments of $2$. 
For each combination of the number of agents and stages, $500$ graphs were randomly generated. 
Note that the number of graph instances, on which our experiments were to be run, was limited due to the exceedingly high running time of ILP (which we later quantify in our observations).
The integer edge weights were generated uniformly at random within the range $[1, 30]$. The algorithms were executed on these BFCMS graphs, and their performance was recorded. Since \firstfairalgo\ does not perform any swaps if the envy in the minimum cost assignment is lower than $2M$, we 
employed rejection sampling
to restrict the set of graph instances to those where this envy exceeds $2M$, which is when \firstfairalgo\ performs any operations at all.

\subsection*{Observations}

\Cref{fig:envy} shows that
the envy ratio for \firstfairalgo\ is bounded by $2$, as guaranteed by our theoretical analysis. 
Consequently, the existence of a solution having envy of at most $2M$ is guaranteed, thus resulting in the envy ratio for ILP also being bounded by $2$.
The envy ratios for \firstfairalgo\ and ILP are close and do not see significant variation with the number of agents or stages, since their primary criterion is that envy is to be less than $2M$.
Since \extfairalgo\ continues 
swapping and 
reducing envy 
even post achieving an 
envy of under $2M$, its envy ratio is significantly lower.

\begin{figure}[t]
\begin{subfigure}[b]{0.24\textwidth}
    \centering\includegraphics[width=\linewidth]{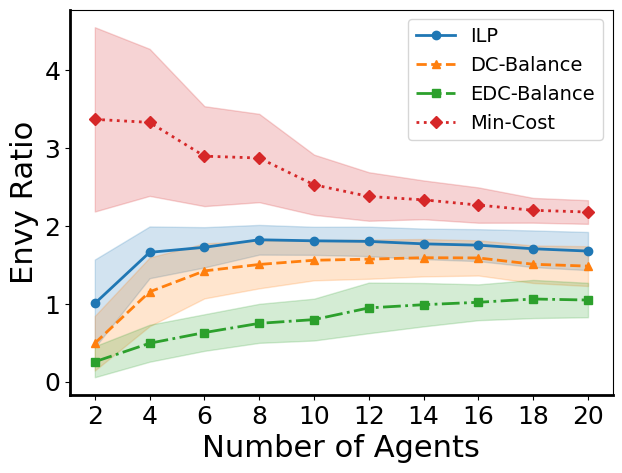}
    \caption{}
    \label{fig:envy_n}
  \end{subfigure}
  \begin{subfigure}[b]{0.23\textwidth}
    \centering \includegraphics[width=1.04\linewidth]{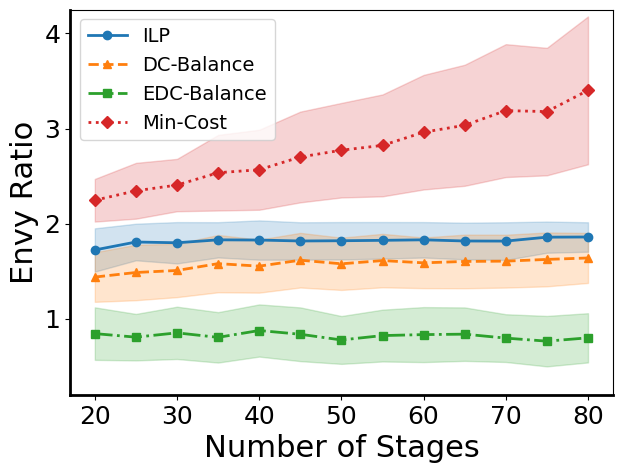}
    \caption{}
    \label{fig:envy_k}
  \end{subfigure}
  \captionsetup{skip=-12pt}
  \caption{Impact on the envy ratio 
  with respect to (a) the number of agents (given $K=40$) and (b) the number of stages (given $n=10$).
  }
  \label{fig:envy}
\end{figure}

\begin{figure}[t]
\begin{subfigure}[b]{0.23\textwidth}
    \centering\includegraphics[width=1.03\linewidth]{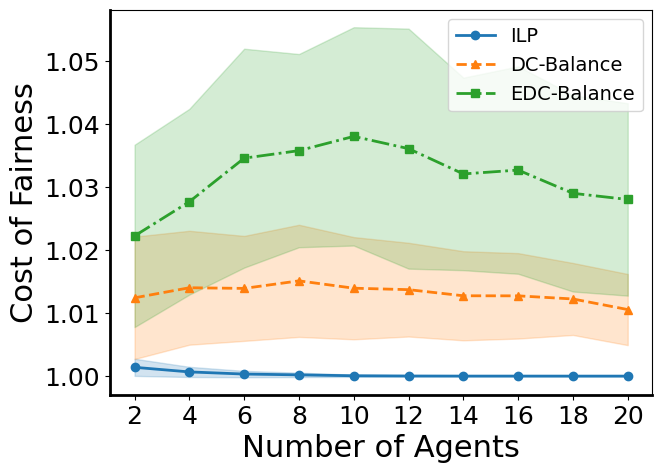}
    \caption{}
    \label{fig:cof_n}
  \end{subfigure}
  \begin{subfigure}[b]{0.23\textwidth}
    \centering \includegraphics[width=1.04\linewidth]{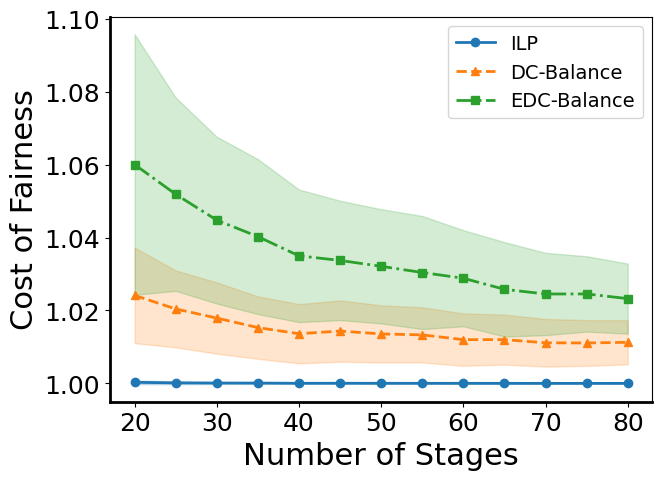}
    \caption{}
    \label{fig:cof_k}
  \end{subfigure}
  \captionsetup{skip=-12pt}
  \caption{Impact on $CoF$ with respect to (a) the number of agents (given $K=40$) and (b) the number of stages (given $n=10$).
  }
   \label{fig:cof}
\end{figure}

\Cref{fig:cof_n} shows that the dependency of CoF on the number of agents for \firstfairalgo\ is not as prominent. As explained after \Cref{thm:CoF}, the dependency of the CoF bound on $n$ is replaceable with that on the minimum edge weight $m$ (if $m>0$) and $M$.
As the edge weights for all the graph instances are 
drawn
from the set of integers in $[1,30]$, the values of $M$ and $m$ would be the same in almost all instances, thus resulting in the dependency of CoF on $n$ being not as prominent.
From \Cref{fig:cof_k}, the CoF in excess of $1$ 
(i.e., $\text{CoF}-1$) 
for \firstfairalgo\ seems to change inversely with the number of stages. This is inferable from \Cref{thm:CoF} to an extent, which shows that the bound on CoF in excess of $1$, 
decreases as an inverse function of $C(S^*)$ and hence also of $K$ (as $C(S^*)$ would be directly proportional to $K$, given that the edge weights are generated uniformly at random).
Considering also $\frac{\mathcal{E}(S^*)}{M}$ to increase approximately linearly with $K$ (as per the envy ratio plot for \genmincostalgo\ in \Cref{fig:envy_k}), the bound on CoF in excess of $1$ (\Cref{thm:CoF}) would be approximately proportional to $\frac{\log K}{K}$,
which closely resembles an inverse function of $K$.
The $CoF$ for the \extfairalgo\ algorithm is higher, as expected, while that for the ILP is the lowest owing to its formulation for minimizing cost.

 \Cref{fig:time} shows that the running time for ILP is several orders of magnitude higher than that of the other algorithms;  this follows from our discussion on the computational inefficiency of solving analogous ILPs in the literature. The running time for ILP increases rapidly with the number of stages, and even more so, exponentially with the number of agents. To put its running time in perspective, for $20$ agents and $40$ stages, \firstfairalgo\ takes $\sim$$0.02$ seconds on average, while ILP takes $\sim$$6000$ seconds which is slower by a factor of $\sim$$3 \times 10^5$.
Owing to the ILP approach being computationally exorbitant, a primary advantage of \firstfairalgo\
is its 
low computational cost while still providing bounds on 
envy and $CoF$.
From \Cref{fig:time_noILP} (which excludes ILP for better visualization of the other algorithms' running times), a trivial observation is that the running time of \extfairalgo\ exceeds that of \firstfairalgo, which in turn, exceeds that of \genmincostalgo.
A  notable observation is that the running time of \firstfairalgo\ increases linearly with the number of agents and logarithmically with the number of stages.
This is directly relatable to \Cref{thm:num_iters} which showed that the bound on the number of swaps required is a linear function of $n$ and logarithmic function of $K$.

\begin{figure}[t]
\begin{subfigure}[b]{0.24\textwidth}
    \centering\includegraphics[width=1.03\linewidth]{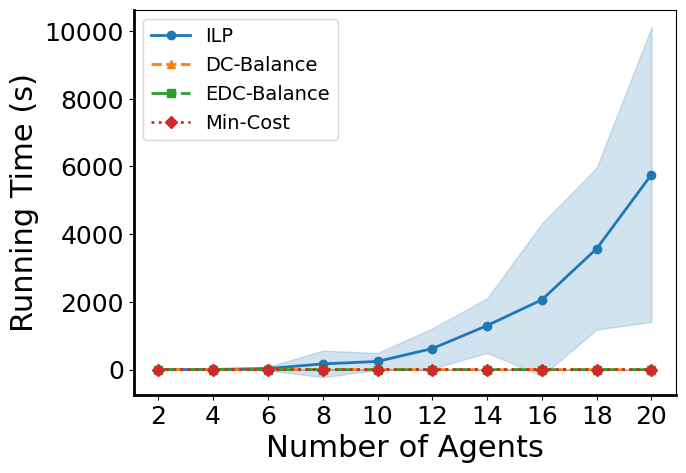}
    \caption{}
    \label{fig:time_n}
  \end{subfigure}
  \begin{subfigure}[b]{0.23\textwidth}
    \centering \includegraphics[width=1.07\linewidth]{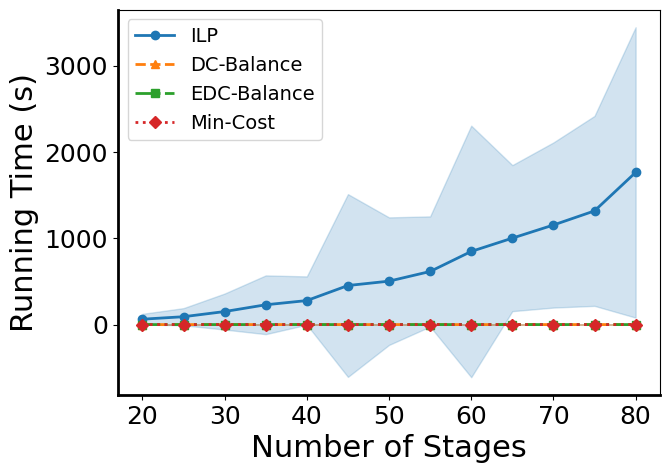}
    \caption{}
    \label{fig:time_k}
  \end{subfigure}
  \captionsetup{skip=-12pt}
  \caption{Impact on running time with respect to (a) the number of agents (given $K = 40$) and (b) the number of stages (given $n = 10$).
  }
   \label{fig:time}
\end{figure}

\begin{figure}[t]
\begin{subfigure}[b]{0.24\textwidth}
    \centering\includegraphics[width=1.03\linewidth]{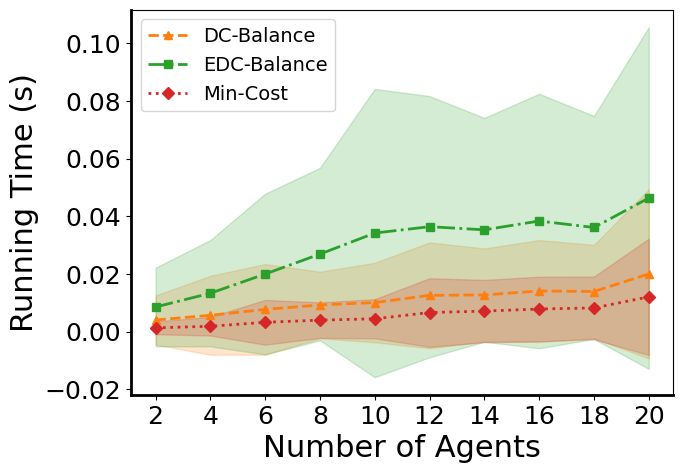}
    \caption{}
    \label{fig:time_noILP_n}
  \end{subfigure}
  \begin{subfigure}[b]{0.23\textwidth}
    \centering \includegraphics[width=1.07\linewidth]{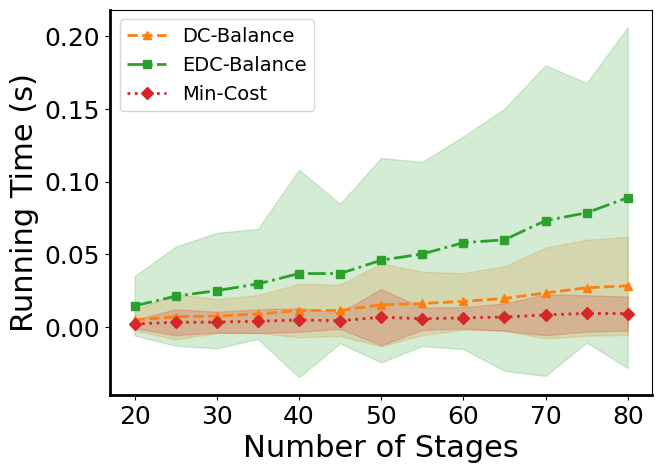}
    \caption{}
    \label{fig:time_noILP_k}
  \end{subfigure}
  \captionsetup{skip=-12pt}
  \caption{Impact on running time 
  for algorithms excluding ILP
  with respect to (a) the number of agents (given $K=40$) and (b) the number of stages (given $n=10$).
  }
   \label{fig:time_noILP}
\end{figure}

\section{Conclusion}
This paper introduced the problem of fair assignment on multi-stage graphs, where each agent needs to be assigned a distinct path such that the envy among them is mitigated. 
The paper showed that 
minimizing the envy among the agents is NP-hard,
and hence
offered poly-time algorithms that 
provide 
bi-criteria additive approximation guarantees on envy and cost.
Through experiments, we showed that the proposed solutions are far more efficient in terms of running time as opposed to the ILP approach
that is typically employed for analogous problems in the literature on fair scheduling.

As a future direction, one can extend the algorithms to constrained path assignment problems, where a node can only cater to a subset of agents, or after reaching a particular node in a stage, an agent can only go to a subset of nodes in the next stage. 
One can also look at the problem where the agents might have preference ordering over the nodes of the stages. For example, some workstations at a stage might be preferred by an agent over others. In this case, one will have to design a fairness metric considering agents' preferences
along with their costs. 
One can also look beyond the multi-stage graph and consider a fair disjoint paths problem on general graphs; this would pose significant challenges as general graphs lack a specific structure. For example, \cite{Lochet21} showed that even deciding whether an instance admits a disjoint shortest path solution is a W[1] hard problem parametrized by $n$ in a general graph. Therefore, finding a minimum or bounded envy solution on general graphs will be quite a challenging problem.

\begin{ack}

Vibulan J was supported by Uplink Internship Program of India Chapter of ACM SIGKDD. The project is also partially supported by  Anusandhan National Research Foundation, India, with grant number: MTR/2022/000818.

\end{ack}


\label{page:end}

\newpage
\appendix

\section*{APPENDICES}

\section{Proofs: NP-Hardness of Minimizing Envy}
\label{app:nphard}

\thmnphardnumagents*

\begin{proof}

Consider a reduction from Numerical 3-Dimensional Matching (N3DM).
In this problem, we are given disjoint sets $A = \{a_1,...,a_n\}$, $B = \{b_1,...,b_n\}$, $C = \{c_1,...,c_n\}$ of integer elements 
such that $\sum A + \sum B + \sum C = nt$,
where $\sum A$ denotes $\sum_{a \in A} a$. The objective is to determine if it is possible to partition $A \cup B \cup C$ into $n$ disjoint sets $S = S_1,...,S_n$ such that each $S_i$ contains exactly one element from each of $A$, $B$ and $C$ and 
$\sum_{e\in S_i} e = t$. 
This problem is known to be NP-complete in the strong sense \cite{garey1979computers}.

For each instance of N3DM, let us construct a BFCMS graph $\mathcal{G} = (V, E)$ with the vertex set $V$ partitioned into $4$ disjoint subsets (or stages) $\{V_1, V_{2}, V_{3}, V_{4}\}$, each containing $n$ nodes. Every node from the stage $V_i$ is linked to every node in the stage $V_{i+1}$. Each node in stages $2, 3, 4$ are assigned a label corresponding to an element in $A, B, C$ respectively. Assign edge weights such that every edge ending at node with label of integer element in $A \cup B \cup C$ has weight equal to that integer element.
We show that the envy of the envy minimizing assignment is $0$, if and only if, there exists a desired partition in the original N3DM instance.

$(\Longrightarrow)$
If there is a partition $P = \{a^\prime_1, b^\prime_1, c^\prime_1\},...,\{a^\prime_n, b^\prime_n, c^\prime_n\}$ satisfying the constraints in the original N3DM instance, 
construct $n$ node-disjoint paths starting from any node in $V_1$ not in any paths constructed so far and passing through nodes labeled $a^\prime_i, b^\prime_i, c^\prime_i$. Since these paths are node-disjoint, this is a valid assignment on the BFCMS graph and each path has cost $t$. Therefore, the envy of this assignment is $0$.

$(\Longleftarrow)$
If the envy of the envy minimizing assignment of the constructed BFCMS graph is $0$, then the minimum envy assignment consists of $n$ node-disjoint paths, say through nodes $(v_1, a^\prime_1, b^\prime_1, c^\prime_1),\ldots,$ $(v_n, a^\prime_n, b^\prime_n, c^\prime_n)$, each of cost $t$. This is because the total cost of any assignment is $nt$. Let $S^\prime_i = \{a^\prime_i, b^\prime_i, c^\prime_i\}$. Observe that $S^\prime = S^\prime_1,...,S^\prime_n$ is a desired partition in the N3DM instance as elements in each $S^\prime_i$ sum up to $t$.

Therefore, finding the envy of the envy minimizing assignment is NP-hard for a BFCMS Graph $\mathcal{G}$ with $K = 4$ since otherwise, N3DM could be solved by constructing the corresponding BFCMS graph and checking if the envy of the envy minimizing solution is $0$.
Consequently, finding the envy minimizing assignment is also NP-hard for a BFCMS Graph $\mathcal{G}$ with $K = 4$.

For an arbitrary BFCMS Graph $\mathcal{G}$ with $K > 4$, finding an envy minimizing assignment is also NP-hard. Otherwise, we can find an envy minimizing solution with 4 stages by adding more stages where every edge ending at nodes at the stage has cost $0$.
\end{proof}

\thmnphardnumlevels*

\begin{proof}

Consider a reduction from Multi-Way Number Partitioning, sometimes referred to as `Multi-Processor Scheduling'. The problem is known to be NP-complete \cite{garey1979computers}. The problem is parametrized by a positive integer $n \ge 2$.  The input to the problem is a multiset $S$ of numbers, whose sum is $nT$. The problem is to decide whether $S$ can be partitioned into $n$ subsets such that the sum of each subset is exactly $T$.

For each instance of $n$-way number partition, construct a BFCMS graph $\mathcal{G}$ with $n$ agents and $\lvert S \rvert + 1$ stages. At each stage greater than $1$, choose any one node and assign a label corresponding to any element in $S$ such that every element in $S$ is assigned as a label to exactly one node. Assign edge weights such that every edge ending at a labeled node has a weight equal to the value of the label of the node. For every other edge, assign an edge weight of $0$.
We show that the envy of the envy minimizing assignment is $0$, if and only if, there exists a desired partition in the original $n$-way number partition instance.

$(\Longrightarrow)$
If there is a partition of $S$ into $n$ subsets such that the sum of each subset is exactly $T$, then we can construct an assignment in a BFCMS graph as follows.  For each such subset, construct paths starting from any node in the first stage not in any paths constructed so far. For each subsequent stage, if the label of the labeled node in that stage is in the subset, let the path pass through the node; otherwise, let the path pass through any other node in the stage not in any paths constructed so far. Observe that each constructed path corresponds to some subset in the sense that it passes through labeled nodes that have the label of the elements in the subset, and has path cost equal to the sum of the elements in the subset. Since all the $n$ node-disjoint paths have cost $T$, the envy of this assignment is $0$.

$(\Longleftarrow)$
If the envy of the envy minimizing assignment is $0$, then the assignment consists of $n$ node-disjoint paths. Each path has cost $T$. For each path, consider the subset of elements of $S$ that are labels of all the labeled nodes the path passes through. These subsets are mutually exclusive, and the union of the subsets is $S$. Moreover, the sum of each subset is exactly $T$. Therefore there exists a desired partition in the original $n$-way number partition instance.

Therefore, finding the envy of envy minimizing assignment is NP-hard for a BFCMS graph with a variable number of stages and exactly $n$ agents for $n \ge 2$. As otherwise $n$-way number partition can be solved by constructing the corresponding BFCMS graph and checking if the envy of the envy minimizing solution is $0$. 
Consequently, finding the envy minimizing assignment is also NP-hard for a BFCMS Graph $\mathcal{G}$ with $n$ agents.
\end{proof}

\section{Proof of Bounds on the New Costs After a Swap}

\label{app:relaxation}

\lemmarelaxation*

\begin{proof}

From the inequalities in (\ref{eq:ineq_n}) of the main paper, we know that the choice of stage $i$ (starting from which the nodes allocated to the considered pair of agents are swapped) ensures that
\begin{align*}
& \frac{\epsilon}{2} - M \le C_1'' - C_n'' \le \frac{\epsilon}{2} 
\;\;\;
\text{ and } \;\;\;
\frac{\epsilon}{2} < C_1' - C_n' \le \frac{\epsilon}{2}+M .
\end{align*}
We use these,
alongwith
$-M \leq w_1-w_2 \leq M, \,  \forall w_1, w_2 \in \{x, x', y, y'\} \; (\because x, x', y, y' \in [0,M])$,
in the following derivation.
\allowdisplaybreaks
\begin{align*}
\text{(a) }\;
C_1^{new} 
&=  C_1'' + x' + C_n - C_n'
\\
&\le \left( \frac{\epsilon}{2} + C_n'' \right) + x'+ C_n - (C_n'' + y)
\\&= C_n + \frac{\epsilon}{2} + (x'-y)
\\
&\le C_n + \frac{\epsilon}{2} + M 
\\
&= C_1 - \frac{\epsilon}{2} + M 
\\
&= C_1 - \delta .
\\[0.5em]
\text{(b) }\;
C_1^{new} 
&=  C_1'' + x' + C_n - C_n'
\\&= (C_1'  - x) + x'+ C_n  - C_n'
\\
&= C_n + (C_1' - C_n') + (x' - x)
\\&\ge C_n + \frac{\epsilon}{2} - M 
\\
&= C_n + \delta .
\\[0.5em]
\text{(c) }\;
C_n^{new} 
&=  C_n'' + y' + C_1 - C_1'
\\&= C_n'' + y' + C_1 - (C_1'' + x)
\\
&= C_1 + (C_n'' - C_1'') + (y'  - x)
\\&\le C_1 - \frac{\epsilon}{2} + M 
\\
&= C_1 - \delta .
\\[0.5em]
\text{(d) }\;
C_n^{new} 
&=  C_n'' + y' + C_1 - C_1'
\\&= C_n'' + y'+ (C_n + \epsilon) - (C_1'' + x)
\\
&\ge \left(C_1'' - \frac{\epsilon}{2} \right) + y' + C_n + \epsilon - C_1''  - x
\\&= C_n + \frac{\epsilon}{2} + (y'-x)
\\
&\ge C_n + \frac{\epsilon}{2} - M 
\\
&= C_n + \delta .
\end{align*}

The above, combined, give the result.
\end{proof}

\section{Dependency of the \texorpdfstring{$CoF$}{CoF} Bound on Instance- Specific Parameters for \texorpdfstring{$n \geq 3$}{n ≥ 3} and \texorpdfstring{$K \geq 2 n + 1$}{K ≥ 2n + 1}}
\label{app:NoCoFbound}

Refer to 
\Cref{fig:CoF_Mn}.
For any $n \geq 3$ and $K \geq 2 n + 1$, it can be seen, by considering all possible assignments via brute force, that in the envy-minimizing assignment, $(K-1) \mod \left\lfloor \frac{K-1}{n} \right\rfloor$ agents incur a cost of $2M + \left\lceil \frac{K-1}{n}-1 \right\rceil \gamma$ and the remaining agents incur a cost of $2M + \left\lfloor \frac{K-1}{n}-1 \right\rfloor \gamma$, thus resulting in an envy of $0$ if $K-1$ is a multiple of $n$ and an envy of $\gamma$ otherwise.

For ease of exposition, let us consider the case where $K-1$ is a multiple of $n$, wherein each of the $n$ agents in the envy-minimizing assignment incurs a cost of $2M + \left( \frac{K-1}{n}-1 \right) \gamma$. Hence, the envy is $0$, and the overall cost is $2Mn + \left( K-1-n \right) \gamma$.
This expression comes from the fact that in order to minimize the envy, each of the agents,
which had a zero-cost path in the minimum cost assignment,
is forced to visit a part of the non-zero cost path (i.e., a segment or at least a node); for doing so, each agent has to incur an additional cost of $M$ for visiting a part of the non-zero cost path and a further additional cost of $M$ for returning to its original path. On the other hand, the one agent, which had a non-zero-cost path in the minimum cost assignment, now travels the same length (as each of the other agents) of the path, which had edges with weight $\gamma$, and otherwise travels zero-weight edges which originally belonged to the paths of agents that incurred zero cost in the minimum cost assignment; for doing so, the agent has to incur an additional cost of $M$ for diverting from its original path and a further additional cost of $M$ for returning.
It should be further noted that each agent visits a part of the non-zero cost path at least once in order to minimize the envy (simultaneously, some other agent diverts from the non-zero cost path), and moreover, exactly once in order to minimize the envy while also keeping the overall cost to a minimum. 

Since all the $n$ agents incur a total cost of $2M$ for diverting from and returning to their original paths, the contribution to the overall cost, of the edges traversed in the minimum envy assignment, is $2Mn$.
Also, note that whenever an agent diverts from the non-zero cost path after a stage $i$, the edge having weight $\gamma$ between stages $i$ and $i+1$ is not traversed by any agent; this happens $n$ times as each agent diverts once.
So, the contribution to the overall cost, of the parts of the non-zero cost path traversed in the minimum envy assignment, is $(K-1)\gamma - n\gamma$.
Hence, for the case where $K-1$ is a multiple of $n$, the resulting $CoF$ is 
\begin{align*}
\frac{2Mn + ( K-1-n ) \gamma}{(K-1)\gamma} \approx 1+\frac{2Mn}{(K-1)\gamma} 
\;\; \text{($\because \gamma << 2M$)} .
\end{align*}

\begin{figure}[t]
    \centering
\includegraphics[width=0.65\linewidth]{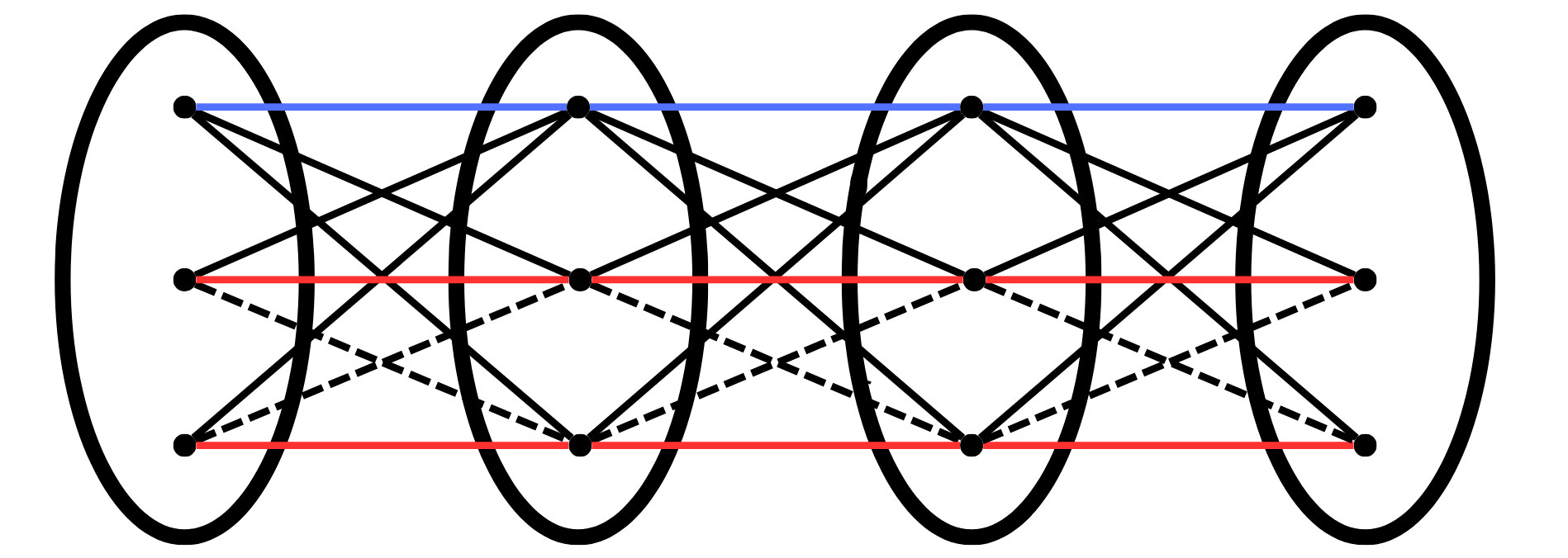}
    \caption{In the given example instance, edges marked blue have a weight of $\gamma$ and constitute the non-zero cost path of one of the agents in the minimum cost assignment. Edges marked red have a weight of zero and constitute the zero cost paths of the other $(n-1)$ agents in the minimum cost assignment.
    Edges denoted by solid black lines have a weight of $M$ and connect a node on the non-zero-cost path with a node on a zero-cost path.
    Edges denoted by dashed black lines have a weight of zero and connect a node on a zero-cost path with that on another zero-cost path.}
    \label{fig:CoF_Mn}
\end{figure}

For the case where $K-1$ is not a multiple of $n$, from the explanation provided earlier that $(K-1) \mod \left\lfloor \frac{K-1}{n} \right\rfloor$ agents incur a cost of $2M + \left\lceil \frac{K-1}{n}-1 \right\rceil \gamma$ and the remaining agents incur a cost of $2M + \left\lfloor \frac{K-1}{n}-1 \right\rfloor \gamma$, the $CoF$ can be obtained by dividing the overall cost in the minimum envy assignment by $(K-1)\gamma$, i.e., 

\begin{small}
\begin{align*}
\frac{1}{(K-1)\gamma}
\Biggl( & \left( (K\!-\!1) \!\!\!\!\! \mod \!\!\! \left\lfloor \frac{K-1}{n} \right\rfloor \right) \! \left( 2M \!+\! \left\lceil \frac{K-1}{n}\!-\!1 \right\rceil \gamma \right) \\ \!+\! & \left( n \!-\! (K\!-\!1) \!\!\!\!\! \mod \!\!\! \left\lfloor \frac{K-1}{n} \right\rfloor \right) \! \left( 2M \!+\! \left\lfloor \frac{K-1}{n}\!-\!1 \right\rfloor \gamma \right) \Biggr)
 .
\end{align*}
\end{small}

Thus, in general, an assignment that minimizes envy in such instances leads to a $CoF$ that increases with $M$ and $n$. Hence, 
it is infeasible to provide a constant instance-independent bound on $CoF$.

\section{A Note on Minimax Share (MMS)}

\label{app:mms}

In applications like supply chain, vehicle routing problem, etc., it is desirable to 
minimize the maximum cost 
of
any agent. This is often termed \emph{makespan} in the scheduling literature.

\begin{definition}[Minimax Share]
An agent’s minimax share cost is given by: $\text{MMS}(\mathcal{G}) = \min_{S \in \mathcal{F}} \max_{j \in [n]} C(P_j),$ where \(S = \{P_1,...,P_n\}\) is a solution in the set of all valid solutions \(\mathcal{F}\). Given a BFCMS graph \(\mathcal{G}\), assignment \(S\) is MMS-Fair if \(C(P_i) \leq MMS(\mathcal{G})\) for every \(i \in [n]\).
\end{definition}

For an arbitrary fully connected multi-stage graph with a given $K \ge 4$ and arbitrary $n$, or that with a given $n \ge 2$ and arbitrary $K$, finding an MMS-Fair assignment is NP-hard. The proof is similar to \Cref{thm:nphard_numlevels}. We now provide a bound on maximum cost of any agent in the solution returned by \firstfairalgo.

\textbf{Result.}
If the agents' costs in the solution returned by \firstfairalgo\ are arranged in descending order, say $C_1,..., C_n$,
we have $C_1-C_i \leq (2+\alpha)M, \forall i \in \{2,...,n\}$.
The difference between the maximum cost over agents ($C_1$) and the mean is
\begin{align*}
\allowdisplaybreaks
&C_1 - \frac{\sum_{i=1}^n C_i}{n} = \frac{\sum_{i=2}^n (C_1-C_i)}{n} \leq \frac{(n-1)(2+\alpha)M}{n}\\
&\implies
C_1 \leq \frac{\sum_{i=1}^n C_i + (n-1)(2+\alpha)M}{n} .
\end{align*}

\noindent
Therefore, the cost of an agent with the maximum cost, in a fair assignment returned by \firstfairalgo\ algorithm, is bounded by
\[
   \frac{C(S^*) + 2M\left\lfloor \frac{n}{2} \right\rfloor \left\lceil \log_2 \left( \frac{\mathcal{E}(S^*)-2M}{\alpha M}\right) \right\rceil + (n - 1)(2+\alpha)M}{n} .
\]

Note that the optimal MMS cost can be no less than $\frac{C(S^*)}{n}$. Therefore, the difference between the maximum cost of an agent as provided by \firstfairalgo\ and that of the optimal assignment minimizing
the maximum cost over the agents, is bounded by $O\left(M\log\left(\frac{K}{\alpha}\right)\right)$
(since $\alpha$ is a small quantity in general, implying $2+\alpha \in O(1)$).

\section{Sequential Hungarian Algorithm}

\label{app:seqhung}

We introduce an algorithm to find the optimal solution, which forms the base for subsequent fair solutions.
Consider a 2-stage graph problem 
with a total of $n^2$ edges between the first two stages of the depicted graph. 
The objective is to select $n$ edges from the available $n^2$ edges, ensuring no two edges share a common node and minimizing the overall sum of the selected edges' costs. 
This assignment problem for the 2-stage graph can be efficiently addressed using the Hungarian algorithm \cite{kuhn1955}.

A general assignment problem on BFCMS  with $K$ stages comprises a consecutive sequence of $K-1$ elementary 2-stage sub-problems.
To find the minimum cost solution, the Sequential Hungarian (\mincostalgo) algorithm 
[\Cref{alg:seqhung}]
iteratively invokes the Hungarian algorithm on each of these $K-1$ elementary 2-stage problems sequentially.

\begin{algorithm}[ht]
\DontPrintSemicolon

  \KwInput{A multi-stage graph $\mathcal{G} (\{V_j\}_{j=1}^K, E)$} 
  \KwOutput{\mbox{Solution $S=\{P_1,\ldots,P_n\}$ with $P_i = \{p_i^j\}_{j=1}^{K-1}$}}
  
     $S \gets \emptyset$

    \For{each $j$ in $[K-1]$}
    {
         $\{p_i^j\}_{i=1}^n \gets$ Hungarian(stage $j$, stage $j+1)$
         
         $S.\text{append}(\{p_i^j\}_{i=1}^n)$
    }
\caption{\mincostalgo}
\label{alg:seqhung}
\end{algorithm}

It is easy to see that the \mincostalgo\ algorithm 
returns a solution $S$ that satisfies $C(S) = C(S^*)$.
To show this, observe that
the total cost of solution $S=\{P_1,\ldots,P_n\}$ returned by \mincostalgo, with $p_i^j$ representing the edge of $P_i$ from stage $j$ to $j+1$,  is 
\begin{align*}
    C(S) = \sum_{i=1}^n\sum_{e\in P_i}w_e = \sum_{i=1}^n\sum_{j=1}^{K} w_{p_i^j} = \sum_{j=1}^{K}\sum_{i=1}^n w_{p_i^j} .
\end{align*}

For each $j$, \mincostalgo\ chooses the assignment that minimizes the sum of weights for all agents. Therefore, $\sum_{j=1}^{K}\sum_{i=1}^n w_{p_i^j} \le \sum_{j=1}^K\sum_{i=1}^n w_{e_i^j} ,$ 
where $e_i^j$ could be any arbitrary edge from stage $j$ to $j+1$. However, $\sum_{j=1}^K\sum_{i=1}^n w_{e_i^j}$ represents any possible solution $S \in \mathcal{F}$, thus we have $C(S) \le C(S')\ \forall S' \in \mathcal{F}$, i.e., $C(S) = C(S^*)$.

\newpage

\section{ILP Formulation}
\label{app:ILP}

The following ILP formulation minimizes cost in the BFCMS problem with the constraint that envy $\leq 2M$.

\begin{table}[ht]
\label{tab:elementsILP}
\centering
\caption{Elements of ILP}
\begin{tabular}{lp{0.35\textwidth}}
\toprule
\textbf{Sets} & \textbf{Description} \\
\midrule
$\mathcal{V}$ & Set of agents \\
$\mathcal{N}$ & Set of nodes \\
\midrule
\textbf{Variable} & \textbf{Description} \\
\midrule
$x_{ijv}$ & $x_{ijv} = 1$ if agent $v \in \mathcal{V}$ goes from \par node $i \in \mathcal{N}$ to node $j \in \mathcal{N}$ and 0 otherwise \\
\midrule
\textbf{Parameters} & \textbf{Description} \\
\midrule
$c_{ij}$ & Cost of going from node $i \in \mathcal{N}$ to node $j \in \mathcal{N}$ \\
$k(i)$ & Stage to which node $i$ belongs \\
M & Maximum edge weight \\
\bottomrule
\end{tabular}
\end{table}

\begin{equation*}
    \text{Minimize } \sum_{v \in \mathcal{V}} \sum_{i \in \mathcal{N}} \sum_{j \in \mathcal{N}} c_{ij} x_{ijv} 
\end{equation*}

\textbf{subject to}

\begin{flalign*}
    0 \le x_{ijv} \le 1, && \forall i \in \mathcal{N}, \forall j \in \mathcal{N}, \forall v \in \mathcal{V} 
    \\ 
    x_{ijv} = 0, && \forall j \in \mathcal{N}, \forall i \in \mathcal{N}, k(j) \neq k(i) + 1 
    \\
    \sum_{j \in \mathcal{N}} \sum_{v \in \mathcal{V}} x_{ijv} = 1, && \forall i \in \mathcal{N}, 1 \le k(i) \le K - 1 
    \\ 
    \sum_{i \in \mathcal{N}} \sum_{v \in \mathcal{V}} x_{ijv} = 1, && \forall j \in \mathcal{N}, 2 \le k(j) \le K 
    \\ 
    \sum_{\substack{i \in \mathcal{N} \\ k(i) = 1}} \sum_{j \in \mathcal{N}} x_{ijv} = 1, && \forall j \in \mathcal{N}, \forall v \in \mathcal{V} 
    \\ 
    \sum_{i_1 \in \mathcal{N}} x_{i_1 j v} = \sum_{i_2 \in \mathcal{N}} x_{j i_2 v}, && \forall v \in \mathcal{V}, \forall j \in \mathcal{N}, 2 \le k(j) \le K - 1 
\end{flalign*}
\vspace{-5mm}
\begin{flalign*}
    \sum_{i, j \in \mathcal{N}} c_{ij} x_{ijv_1} - \sum_{i, j \in \mathcal{N}} c_{ij} x_{ijv_2} \le 2M, &&\forall v_1, v_2 \in \mathcal{V}, v_1 \neq v_2 
\end{flalign*}

\end{document}